\newcommand{\be}{\begin{equation}} \newcommand{\ee}{\end{equation}}
\newcommand{\bea}{\begin{eqnarray}} \newcommand{\eea}{\end{eqnarray}}
\newcommand{\beaa}{\begin{eqnarray*}} \newcommand{\eeaa}{\end{eqnarray*}}
\newcommand{\ben}{\begin{enumerate}} \newcommand{\een}{\end{enumerate}}
\begin{document}

\title{On the Establishment, Persistence, and Inevitable Extinction of
  Populations}



\titlerunning{The Life Time of Populations}        

\author{Kais Hamza \and Peter Jagers \and Fima C. Klebaner}


\institute{Kais Hamza \at
              School of Mathematical Sciences, Monash University, Clayton, 3058.Vic. Australia.\\
              \email{kais.hamza@sci.monash.edu.au}                   
           \and Peter Jagers \at Mathematical Sciences, Chalmers
           University of Technology and University of Gothenburg,
           SE-412 96 Gothenburg, Sweden.\\
           \email{jagers@chalmers.se}
           \and
           Fima C. Klebaner \at
             School of Mathematical
Sciences, Monash University, Clayton, 3058 Vic. Australia. \\
              \email{fima.klebaner@sci.monash.edu.au}                   }


\maketitle

\begin{abstract}
Comprehensive models of stochastic, clonally reproducing populations are defined in terms of general branching processes, allowing birth during maternal life, as for higher organisms, or by splitting, as in cell division. The populations are assumed to start small, by mutation or immigration, reproduce supercritically while smaller than the habitat carrying capacity but subcritically above it. Such populations establish themselves with a probability wellknown from branching process theory. Once established, they grow up to a band around the carrying capacity in a time that is logarithmic in the latter, assumed large. There they prevail during a time period whose duration is exponential  in the carrying capacity. Even populations whose life style is sustainble in the sense that the habitat carrying capacity is not eroded but remains the same,  ultimately enter an extinction phase, which again lasts for a time logarithmic in the carrying capacity. However, if the habitat can carry a population which is large, say millions of individuals, and it manages to avoid early extinction, time in generations to extinction will be exorbitantly long, and during it,  population composition over ages, types, lineage etc. will have time to stabilise. This paper  aims at an exhaustive description of the life cycle of such populations, from inception to extinction, extending and overviewing earlier results. We shall also say some words on persistence times of populations with smaller carrying capacities and short life cycles, where the population may indeed be in danger in spite of not eroding its environment.
\keywords{\keywords{Extinction \and Persistence \and Stable age
    distribution \and  Branching processes \and  Carrying capacity}}
\end{abstract}

\section{Introduction}
\label{intro}
From a biological viewpoint branching processes are often described as modelling the initial stage of population development, while the population is so small that it is not affected by environmental limitations, but chance events in indivudual life may decide the fate of the whole population. Thus, the classical Galton-Watson extinction problem emerges as the question of early extinction, and one minus the extinction probability as the establishment or invasion probability, cf. \cite{metz96}, \cite{dl}, \cite{waxman}. Classical theory also yields the Malthusian dichotomy, that populations unhampered by environmental limitations, either die out or else grow exponentially \cite{pjbook}. In addition, general branching process theory provides the rate of this growth to infinity and exhibits the stabilisation of composition (\cite{pjbook}, \cite{haccou}) thus underpinning and generalising classical demographic stable population theory, but also showing that relations between individuals, like phylogenies of typical individuals, stabilise \cite{jn}. 

It is common sense that in a finite habitat growth has its limits. These matters were first approached by simplistic macro models, like those of logistic growth. In a more sophisticated, but still deterministic, context they have been analysed  through the individual based approach of structured population dynamics  \cite{cushing}, \cite{odo1}, \cite{odo2}, \cite{webb}, generalised into adaptive dynamics, allowing also mutation and competition, matters we shall disregard here. In such a framework, it was shown that a population not dying out will stabilise at the carrying capacity. From a probabilistic aspect it is however clear that if there is realistic variation between individuals, all  populations  (not growing beyound all bounds) must die out \cite{pjstab}, even those where there are intricate patterns of interaction, dependence loops, or external effects - exempting of course artifacts like immigration from never-ceasing sources. Essentially the sole exception is furnished by ``populations''  where all individuals always beget exactly one child.

The first formulations of probabilistic models with population-size-dependence are due to Klebaner  \cite{JK}, whereas probabilistic formulations of adaptive dynamics have been given in \cite{meleard}, \cite{tran}. 

The notion of  a {\it carrying capacity} plays a great role in biological population dynamics. It is folklore that if a virgin population does not succumb quickly, it should grow  exponentially until its size approaches the capacity of the habitat. Then population size seems to stabilise, at least for the time being. Such  patterns underlie many phenomena in evolutionary biology and ecology,  in particular in adaptive dynamics, \cite{meleard}, \cite{metz}, \cite{meltran}, \cite{tran}, \cite{waxman}, just to mention a few titles in the vast literature.

In an earlier paper we studied this structure in terms of a simple but illustrative Galton-Watson type process with binary splitting, \cite{ksvhj}. Let $K$ denote the carrying capacity and assume that the probability of an individual splitting into two is  $p(z)=K/(K+z)$, if population size is $z$. Otherwise she gets no children. Clearly, the mean reproduction  $2K/(K+z)$ decreases in $z$ and passes 1 precisely at the carrying capacity; the process is supercritical below and subcritical above $K$.

In a sequel \cite{lingering}, this toy model was rendered more realistic,  by replacing binary splitting and deterministic life spans by general, population-size dependent reproduction and general life span distibutions.

In such processes  life-spans may be be influenced by population size, through a  hazard rate  $h_z^K(a)$ of an $a$-aged individual in a population of size $z$. (Not to complicate matters we mostly take life-spans to be continuous.)  Offspring at death (splitting) may be possible, the number of children at splitting depending upon population size $z$ as well as mother's age $a$ at death,   with expectation $m_z^K(a)$, and (female) individuals can give birth according to
age-specific birth rates, $b_z^K(a)$, now to be taken as dependent on the size $z$ of the population at the time when the individual is of age $a$. What renders the thus resulting processes amenable to analysis
is that they are {\em Markovian in the age structure}, \cite{pjbook}, p. 208.

The broadest possible framework would be completely general branching processes, supercritical below and subcritical above the carrying capacity, though some care has to be exercised in such descriptions, since the age-distribution plays a role for the fertility of the population. 



 We give a rigorous formulation of age- and population-size dependent processes which are Markovian in the age structure and have well defined intensities of birth and death. This framework encompasses virtually all classical population dynamics, like age-dependent branching processes as well as birth-and-death processes with age-dependent intensities, and various deterministic approaches, like age structured population dynamics. In the case of Bellman-Harris processes, where reproduction distributions are unaffected by mother's age at death (but not of population size), conditions simplify substantially.

Thus, imagine  a collection of individuals with ages
$(a^1,\ldots,a^{z})=A$, $z=|A|$ being the number of elements in $A$.
It is convenient to regard the collection of ages $A$ as a measure
$$A=\sum_{i=1}^z\delta_{a^i},$$
where $\delta_a$ denotes the point measure at $a$. As usual, the
following notation is  used for a function $f$ on  ${\Bbb R}$ and measure $A$:
$$(f,A)=\int f(a)A(da)=\sum_{i=1}^zf(a^i),$$
the right hand expression of course only if the measure is purely atomic.

For such a  population,  in a habitat of carrying capacity $K$, we assume slightly more generally, that individual life and reproduction can be influenced not only by population size $z$ but by the whole array $A$, or some suitable aspect of it.  An individual of age $a$ thus has a random life span with hazard rate $h_A^K(a)$. During life, she gives birth to single children with intensity $b_A^K(a)$ at age $a$. If she dies, she splits into a random number of children which follows a distribution that may depend upon $A$. Its expectation is denoted by $m_A^K(a)$ and the second moment by $v_A^K(a)$, if the mother's age at death was $a$. When the carrying capacity is fixed, we often allow ourselves not to spell it out, writing $h_A(a)$ etc. . However, what we really have in mind remains population-size-dependence, usually but not necessarily measured by the number of individuals alive, and we  allow ourselves the inconsistency sometimes to suffix parameters by $z$ rather than $A$, or even by the ``population density'' $x=z/K$.

 Instead of  the population size $z=|A|=(1,A)$, we could use some other environmental load or crowding measure like $(W,A) = \sum W(a^i)$,  $W(a)$ denoting the body mass or DNA content of a cell aged $a$. Increasing $W$ would correspond to a situation where older cells, being larger, require more of space or resources. 
In the deterministic literature there have been several, more or less {\em ad hoc} choices like linear, exponential or so called von Bartalanffy growth of individual cell mass with age, \cite{webb}, \cite{chapman}. We shall illustrate by linear growth, $W(a)=\kappa + \lambda a$. For a stochastic approach to body mass structured population dynamics cf. \cite{odwyer}.
Generally  the population size could be any (additive) functional of $A$, $(f,A)$.  (Note however that a full-fledged stochastic theory should allow individual variation in the function $W$.)

Whilst there are no deaths and no births, the population changes only by ageing.
When an individual dies its point mass disappears and an offspring number
of point masses at zero age appear. Similarly, when she gives
birth during life a point mass appears at the origin. Thus
population evolution is given by a measure-valued process
$\{A_t^K; t\geq 0\}$. Since process parameters depend upon $K$, there is
a family of such processes indexed by $K$.

In terms of birth and death intensities and mean numbers of children at splitting, the rate of change of population size $(f,A)$ initiated by an $a$-aged member  should then be $f'(a) + f(0)b_A(a)+f(0)h_A(a)m_A(a) -f(a)h_A(a)$, i.e. ageing plus bearing of newborn (zero-aged) children during life plus splitting minus death. The population as measured by $f$  might suitably be termed {\em strictly critical} at population configuration $A$ if the expression vanishes. For an age-independent size measure like the number of individuals around, i.e. $f=1$, $f'(a)=0$, and strict criticality at  $z$ occurs if and only if $b_z(a) + h_z(a)(m_z(a)-1)=0$ for all $a$. Another obvious criticality concept for (classical) population size could be referred to as quenched or frozen criticality: A population is {\em frozen}  critical at size $z$ if and only if $\mu_z(\infty)=1$, where $\mu_z(da)$ denotes the expected reproduction of an individual of age $a$ in a population of size $z$, in terms of intensities and the corresponding life span distribution $L_z$, 
\[\frac{\mu_z(da)}{1-L_z(a)}= (b_z(a) + m_z(a) h_z(a))da.\]
If $b_z(a)=0$ identically in $a$ and $m_z$ is independent of age, the process reduces to a Bellman-Harris age-dependent branching process with population size dependence. Then strict and frozen criticality coincide, and reduce to the classical condition $m_z=1$.  Generally, strict criticality implies frozen.  A third, and fundamental,  concept of criticality is that a population is {\em critical with respect to the age composition A} if and only if
\[(b_A + h_A(m_A-1),A)=0\]
in the case of size being the number of population members, and 
\[(f'+f(0)b_A+h_A(f(0)m_A -f)), A)=0\]
in general. Clearly, strict criticality implies not only frozen but also age composition criticality, as defined. In the next section, we shall see that there is no trend to population change, neither to increase nor to decrease, when the population is composition critical; change is random in the sense that it has a  martingale character.

Under fairly general assumptions, we prove first that a small population either dies out directly, without approaching the carrying capacity, or else comes close to $K$, i.e. reaches any band $[(1-\varepsilon)K,(1+\varepsilon)K], 0< \varepsilon < 1$, in a time of order $\log K$. Once the population size has reached such a level, it stays there for an exponentially long time, i.e. its expected persistence time is $O(e^{cK})$ for some $c>0$. Usually, such results are derived from a Large Deviation Principle yielding the time to exit from a domain of attraction of a fixed point, cf. \cite{FW}. In \cite{lingering}, we gave a  proof by an exponential martingale inequality. Basically, the quick growth and fading of populations follows from a natural principle of monotonicity: if a process remains below (alternatively, above) a certain level during a time period, then it should be larger (smaller) than the similarly started classical branching process, with parameters frozen at the level in question. The long persistence time around the carrying capacity follows from the criticality there.

 In the past, populations have been studied by measure-valued Markov processes with various setups, see eg.  \cite{bb}, \cite{dawson}, \cite{eandk} Section 9.4, and \cite{meleard}, \cite{metivier}, \cite{Oe}, \cite{tran}.
We take the state space to be the
finite positive Borel measures on ${\Bbb R}^+$ with the topology of weak
convergence, i.e.
  $\lim_{n\to\infty}\mu_n= \mu$ if and only if
$\lim_{n\to\infty}(f,\mu_n)=(f,\mu)$ for any bounded and continuous
function $f$ on ${\Bbb R}^+$. M\'etivier \cite{metivier} and Borde-Boussion \cite{bb}
imbedded the space of measures into a weighted Sobolev space.
Oelschl\"ager \cite{Oe} chose the state space as the set of signed measures with
yet another topology.  Our model is closest to Oelschl\"ager's, and the French school around M\'el\'eard, cf. \cite {meleard}, \cite{tran} and other papers but formulated in terms of branching rather than birth-and-death processes. Like our paper, Tran \cite{tran} allows age structure. He considers large populations, scales them, and studies the limit behaviour, obtaining in the case of fixed birth rate, no splitting, and a logistic death rate ($h_z(a) = d(a)+ \eta z$) results on large deviations from a limiting process.

\section{Age  and  population dependent processes.}

The basic tools in our analysis are the
generator of the measure-valued Markov population process and  an integral
representation,  known as Dynkin's formula.
An index $A$ in ${\Bbb P}_A$ and ${\Bbb E}_A$ indicates that the population
started at time $t=0$ not from one newborn ancestor but rather from
$z$ individuals, of ages $A=(a^1,\ldots,a^z)$, respectively.
No index means start from some implicit age configuration. The generator
of a Markov measure-valued population-age-dependent branching
process  was given in \cite{JK}. As mentioned, we switch between general  and population
size dependent parameters with suffixes $A$ and $z$, respectively: $b_z, h_z, m_z$ and $v_z^2$, the latter two being the first and second moments of the offspring at splitting $Y(a)$ of an individual dying at age $a$. As pointed out, reference to the carrying capacity is suppressed, when  $K$ is fixed. The reader can easily replace population size $z=(1,A)$ by any other measure $(f,A)$ of size, total body mass, or environmental impact. 

\begin{theorem}\cite{JK} For a bounded differentiable function $F$ on ${\Bbb R}^+$ and  a  continuously
differentiable  function  $f$ on ${\Bbb R}^+$, the  limit
 \be\label{Dynmain} \lim_{t\to 0}\frac{1}{t}{\Bbb
E}_A\Big\{F((f,A_t))-F((f,A))\Big\}={\mathcal G}F((f,A)), \nonumber \ee
exists, where
\begin{eqnarray}
{\cal G}F((f,A)) =F'((f,A))(f',A)+\sum_{j=1}^zb_z(a^j)\{F(f(0)+(f,A))-F((f,A))\}+\nonumber \\
+\sum_{j=1}^z h_z(a^j)\{{\Bbb
E}_A[F(Y(a^j)f(0)+(f,A)-f(a^j))]-F((f,A))\}. \nonumber \label{MainGen}
\end{eqnarray}
Consequently, Dynkin's formula holds: for a bounded $C^1$ function
$F$ on ${\Bbb R}$ and a $C^1$ function $f$ on ${\Bbb R}^+$

 \be\label{generalDynkin}
F((f,A_t))=F((f,A_0))+\int_0^t {\cal G}F((f,A_s))ds+M^{F,f}_t,
 \ee
where $M^{F,f}_t$ is a local martingale with    predictable
quadratic variation
$$\left<M^{F,f},M^{F,f}\right>_t=\int_0^t {\cal G}F^2((f,A_s))ds-2\int_0^t F ((f,A_s)){\cal G}F ((f,A_s))ds.$$

\end{theorem}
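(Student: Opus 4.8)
The plan is to obtain the displayed generator formula by a direct small-time analysis of the age configuration, and then to deduce Dynkin's formula and the quadratic-variation formula from the piecewise-deterministic Markov structure of $\{A_t^K\}$. Fix the starting configuration $A=(a^1,\dots,a^z)$. Since the process is Markovian in the age structure, on $[0,t]$ it first ages deterministically, the total event intensity at time $s$ (while the size is still $z$) being $\Lambda_A(s)=\sum_{j=1}^z\big(b_z(a^j+s)+h_z(a^j+s)\big)$. I would split the sample space into: (i) no event on $[0,t]$; (ii) exactly one birth, by individual $j$; (iii) exactly one death, by individual $j$; (iv) at least two events. Because $b_z,h_z$ are locally bounded, hence bounded on $\bigcup_j[a^j,a^j+1]$, one gets $\mathbb P_A(\text{(iv)})=o(t)$, $\mathbb P_A(\text{(i)})=1-t\,\Lambda_A(0)+o(t)$, $\mathbb P_A(\text{(ii), individual }j)=t\,b_z(a^j)+o(t)$ and $\mathbb P_A(\text{(iii), individual }j)=t\,h_z(a^j)+o(t)$. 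On (i), $(f,A_t)=\sum_jf(a^j+t)=(f,A)+t\,(f',A)+o(t)$ as $f\in C^1$ and $z<\infty$, so by differentiability of $F$, $F((f,A_t))-F((f,A))=F'((f,A))(f',A)\,t+o(t)$. On (ii) with individual $j$, $(f,A_t)\to(f,A)+f(0)$, and by boundedness and continuity of $F$ the conditional value of $F((f,A_t))$ tends to $F(f(0)+(f,A))$. On (iii) with individual $j$, the dying individual has age $a^j+o(1)$ and leaves $Y(a^j)$ newborns of age $o(1)$, so $(f,A_t)\to(f,A)-f(a^j)+Y(a^j)f(0)$, and by dominated convergence ($F$ bounded) together with weak continuity in the age at death of the splitting law, the conditional expectation of $F((f,A_t))$ tends to $\mathbb E_A\big[F\big(Y(a^j)f(0)+(f,A)-f(a^j)\big)\big]$. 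Multiplying by the corresponding probabilities, summing over the cases, subtracting $F((f,A))$, dividing by $t$ and letting $t\to0$ yields the asserted limit $\mathcal GF((f,A))$, and in particular its existence.

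\emph{Dynkin's formula.} The functional $\Phi_f:A\mapsto F((f,A))$ is bounded, and the computation above identifies $\mathcal L\Phi_f(A)=\mathcal GF((f,A))$, where $\mathcal L$ is the extended generator of the piecewise-deterministic Markov process $\{A_t\}$ — deterministic ageing between jumps, jump rate $\Lambda_{A_t}(0)$, and the birth/death jump kernel described above. By the standard machinery for such processes (telescoping $\Phi_f(A_t)$ over successive jump times, using the ageing ODE in between and compensating the jumps) one obtains that $M^{F,f}_t:=F((f,A_t))-F((f,A_0))-\int_0^t\mathcal GF((f,A_s))\,ds$ is a local martingale; one localizes along $\tau_n=\inf\{t:|A_t|\ge n\}$ because the total jump rate need not be bounded uniformly in the configuration. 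This is exactly Dynkin's formula.

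\emph{Quadratic variation.} Applying the preceding steps with $F^2$ in place of $F$ (still bounded $C^1$), $F^2((f,A_t))-F^2((f,A_0))-\int_0^t\mathcal GF^2((f,A_s))\,ds$ is a local martingale, $\mathcal GF^2$ being the displayed generator evaluated at $F^2$. On the other hand, writing $X_t=(f,A_t)$, integration by parts for semimartingales gives $F^2(X_t)=F^2(X_0)+2\int_0^tF(X_{s-})\,dF(X_s)+[M^{F,f}]_t$, since the continuous (ageing) part of $t\mapsto F(X_t)$ is of finite variation and contributes nothing to the bracket, leaving the pure-jump term $[M^{F,f}]_t=\sum_{s\le t}(\Delta M^{F,f}_s)^2$. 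Splitting $2\int_0^tF(X_{s-})\,dF(X_s)$ into its martingale part $2\int_0^tF(X_{s-})\,dM^{F,f}_s$ and its drift part $2\int_0^tF(X_{s})\,\mathcal GF(X_s)\,ds$, and comparing the two semimartingale decompositions of $F^2(X_t)$ (uniqueness of the compensator), one identifies the predictable compensator of $[M^{F,f}]$, i.e. $\langle M^{F,f},M^{F,f}\rangle_t$, as $\int_0^t\big(\mathcal GF^2-2F\,\mathcal GF\big)((f,A_s))\,ds$, as claimed.

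\emph{Main obstacle.} The delicate point is the $o(t)$ bookkeeping in the first step: controlling the probability and the contribution to $\mathbb E_A[F((f,A_t))]$ of two-or-more events on $[0,t]$, and handling the dependence of the splitting law on the random age at death, which requires mild regularity (local boundedness and weak continuity in age) of $h_z,b_z$ and of the offspring law near the ages present in $A$; and, for Dynkin's formula, passing rigorously from the pointwise generator limit to the genuine martingale identity via the PDMP structure and the localization $\tau_n$.
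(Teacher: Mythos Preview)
The paper does not actually prove this theorem: it is quoted verbatim from \cite{JK} and no argument is supplied beyond the citation, with Theorem~\ref{DynRep} likewise imported as a corollary. So there is no ``paper's own proof'' to compare against here.

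That said, your proposal is a correct and standard derivation, and it is essentially the argument one finds in the PDMP literature and in \cite{JK} itself. The case split (no event / one birth / one death / $\ge 2$ events) with the $o(t)$ control coming from local boundedness of $b_z,h_z$ is exactly how one computes the pointwise generator of an age-structured process with ageing flow plus Poisson-type jumps; the passage to Dynkin's formula via the piecewise-deterministic Markov structure with localisation along $\tau_n=\inf\{t:|A_t|\ge n\}$ is the right mechanism (and is precisely what the present paper later invokes, e.g.\ in the proof of Theorem~5, under the name $T_n$). Your derivation of the predictable bracket via applying the generator to $F^2$ and comparing with the integration-by-parts decomposition of $F^2((f,A_t))$ is the standard carr\'e-du-champ computation and is correct; the key observation that the continuous ageing part is of finite variation, so that $[M^{F,f}]$ reduces to the sum of squared jumps, is the right one.

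The only caveats are the ones you already flag: you need enough regularity (continuity in $a$ of $b_z,h_z$ and weak continuity of the splitting law in the mother's age) to justify that, conditional on a death at time $s\in(0,t]$ of individual $j$, the limiting contribution is governed by $Y(a^j)$ rather than $Y(a^j+s)$; and you need the PDMP/extended-generator machinery (or an explicit telescoping over jump times) to promote the pointwise limit to the genuine local-martingale identity. Both are routine under the standing boundedness assumptions used throughout the paper.
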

As a corollary the following representation was also obtained; see
\cite{JK}:
\begin{theorem}\label{DynRep} For a $C^1$ function $f$ on $^+$ \be
(f,A_t)=(f,A_0)+\int_0^t(L_{A_s}f,A_s)ds+M_t^f, \label{Dynkin1}, \ee where the   linear operators
$L_A$ are defined by \be
 L_Af=f'-h_Af + f(0)(b_A +h_Am_A),
 \label{GenSD}
\ee and $M_t^f$ is a  local square integrable  martingale with the
sharp bracket given by \be
\left<M^f,M^f\right>_t=\int_0^t\big(f^2(0)b_{A_s}
+f^2(0)v^2_{A_s}h_{A_s}+h_{A_s}f^2-2f(0)m_{A_s}h_{Z_s}f,A_s\big)ds.
\label{sharp} \ee
\end{theorem}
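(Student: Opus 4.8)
The statement is a corollary of the generator formula and Dynkin's identity \eqref{generalDynkin} of the previous theorem, obtained simply by inserting the two test functions $F(x)=x$ and $F(x)=x^{2}$; the only genuine work is a localization to legitimize these unbounded choices. First I would apply \eqref{generalDynkin} with $F=\mathrm{id}$. Then $F'\equiv 1$, so the ageing term of ${\cal G}F((f,A))$ is $(f',A)$; in the birth sum each bracket $F(f(0)+(f,A))-F((f,A))$ collapses to $f(0)$, contributing $f(0)(b_{A},A)$; and in the death sum each bracket becomes ${\Bbb E}_{A}[Y(a^{j})]f(0)-f(a^{j})=m_{A}(a^{j})f(0)-f(a^{j})$, contributing $(h_{A}(f(0)m_{A}-f),A)$. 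Summing, ${\cal G}F((f,A))=(f'-h_{A}f+f(0)(b_{A}+h_{A}m_{A}),A)=(L_{A}f,A)$, which is exactly the operator \eqref{GenSD}. This identifies the integrand in \eqref{Dynkin1} and defines $M^{f}:=M^{\mathrm{id},f}$, which the previous theorem already guarantees to be a local martingale.

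Second, for the sharp bracket I would use the predictable-quadratic-variation formula of the previous theorem, which for $F=\mathrm{id}$ is the carr\'e-du-champ expression $\langle M^{f},M^{f}\rangle_{t}=\int_{0}^{t}\big({\cal G}(x\mapsto x^{2})((f,A_{s}))-2(f,A_{s}){\cal G}(x\mapsto x)((f,A_{s}))\big)\,ds$. I then compute ${\cal G}$ on the square function: the ageing term gives $2(f,A)(f',A)$; each birth bracket is $(f(0)+(f,A))^{2}-(f,A)^{2}=2f(0)(f,A)+f(0)^{2}$; and each death bracket, expanding ${\Bbb E}_{A}[(Y(a^{j})f(0)+(f,A)-f(a^{j}))^{2}]-(f,A)^{2}$ and using the second moment $v^{2}_{A}(a^{j})$ of $Y(a^{j})$, equals $v^{2}_{A}(a^{j})f^{2}(0)+2m_{A}(a^{j})f(0)\big((f,A)-f(a^{j})\big)-2(f,A)f(a^{j})+f(a^{j})^{2}$. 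Subtracting $2(f,A)(L_{A}f,A)$ cancels the ageing term, the cross-terms $2f(0)(f,A)(b_{A},A)$ and $2f(0)(f,A)(h_{A}m_{A},A)$, and $-2(f,A)(h_{A}f,A)$, leaving under the integral exactly $(f^{2}(0)b_{A}+f^{2}(0)v^{2}_{A}h_{A}+h_{A}f^{2}-2f(0)m_{A}h_{A}f,\,A)$, i.e. \eqref{sharp} (the $h_{Z_{s}}$ there being $h_{A_{s}}$).

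The point that needs care, and which I expect to be the main obstacle, is that $F(x)=x$ and $F(x)=x^{2}$ are not bounded, whereas the preceding theorem is stated for bounded $F$. I would dispose of this by localization: choose $F_{n}\in C^{1}$ bounded with $F_{n}(x)=x$ (resp.\ $x^{2}$) on $[-n,n]$ and $|F_{n}(x)|\le|x|+1$ throughout, set $\tau_{n}=\inf\{t\ge 0:\ Z_{t}\ge n\ \text{or}\ |(f,A_{t})|\ge n\}$, and apply \eqref{generalDynkin} to $F_{n}$ on $[0,t\wedge\tau_{n}]$. On $\{s<\tau_{n}\}$ one has $F_{n}((f,A_{s}))=(f,A_{s})$ for $n$ large, and ${\cal G}F_{n}((f,A_{s}))\to(L_{A}f,A_{s})$ as $n\to\infty$ by dominated convergence, the only non-identity contributions coming from far-away jump targets $Y(a^{j})f(0)+(f,A_{s})-f(a^{j})$ whose contribution is controlled by the finiteness of $m_{A}$ (resp.\ $v^{2}_{A}$); likewise for the bracket. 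Since between successive events the process merely ages and events occur at total rate $(b_{A}+h_{A},A)$, bounded on $\{Z\le n\}$, there is no explosion in finite time, so $\tau_{n}\uparrow\infty$ a.s.; letting $n\to\infty$ yields \eqref{Dynkin1} together with the bracket \eqref{sharp} on all of $[0,\infty)$, and the same stopping times $\tau_{n}$ simultaneously exhibit $M^{f}$ as locally square integrable. The remaining work is the routine bookkeeping indicated above.
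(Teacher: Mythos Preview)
Your derivation is correct and is precisely the route the paper intends: it presents this statement as a corollary of the preceding generator theorem, obtained in \cite{JK}, and your specialization $F(x)=x$ (together with the carr\'e-du-champ computation ${\cal G}(x\mapsto x^{2})-2x\,{\cal G}(x\mapsto x)$ for the bracket) is exactly how that corollary is extracted. The localization you spell out to handle the unbounded $F$ is the standard manoeuvre and matches what is implicit in the cited reference; the paper itself supplies no further details.
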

The special choice $f=1$ yields  population size and with $Z_s=(1,A_s)$ the population size at time $s$ and 
\be
Z_t=Z_0 + \int_0^t (b_{Z_s} + h_{Z_s}(m_{Z_s}-1),A_s)ds + M_t^1, \label{popsizeq}\ee
provided parameters are determined by population size.
 
If we turn to body or DNA mass $W,$ it may be noted that it (in principle) increases with age. Criticality therefore requires that the death intensity is mass or age structure dependent, reproduction parameters cannot alone regulate this.  With $W(a) = \kappa + \lambda a, Y_t=(W,A_t)$, and $I(a)=a,$ 
\be
Y_t=Y_0 + \int_0^t (\lambda (Z_s - (Ih_{A_s},A_s))+\kappa(b_{A_s}+h_{A_s}(m_{A_s}-1),A_s)))ds + M_t^W, \label{popageq}\ee
a quite complicated expression, in spite of the straightforward linear growth assumption. It can be slightly simplified in the cell relevant case of splitting, where the bearing term $b_{A_s} =0$. 
 Further, from
Theorem 2.3 of \cite{JK} it follows that if $f\ge 0$ satisfies the
(linear growth) condition (H1)
  $$
  |(L_Af, A)|\le C(1+(f,A))\eqno (H1)
  $$
for some $C>0$ and any $A$, and if $(f,A_0)$ is integrable, then so is
$(f,A_t)$. Its expectation is bounded by
\begin{equation}\label{meanbound}
{\Bbb E}[(f,A_t)]\le \Big({\Bbb E}[(f,A_0)]+Ct\Big)\Big(1+e^{Ct}/C\Big).
\end{equation}
We use expectation without an index to indicate that the starting age configuration $A_0$ is arbitrary and may well be random.   $C$ denotes a generic constant, not necessarily the same in different contexts. A family of functions $f_z$ is called uniformly bounded if $\sup_{z,a}|f_z(a)|<\infty$.
The following corollary is easy to check.

\begin{corollary} Suppose that the functions $b_z$,
$m_z$, and $h_z$ are uniformly bounded.   If  $f$ and $f'$
are bounded, then the growth condition $(H1)$ is satisfied and the
conclusion (\ref{meanbound}) holds.

In particular, the function $f(a)=1$ satisfies $(H1)$ and so
\begin{equation}\label{meansize}
{\Bbb E}[Z_t]\le \Big({\Bbb E}[Z_0]+Ct\Big)\Big(1+e^{Ct}/C\Big).
\end{equation}
Further, if  the functions $v_z$ are uniformly bounded as well, then $M^1_t$ is a square integrable martingale with the quadratic variation
\begin{equation}\label{boundQV}
\left<M^1,M^1\right>_t\le C\int_0^tZ_sds.
\end{equation}
\end{corollary}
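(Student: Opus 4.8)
The plan is to reduce everything to the explicit form of $L_A$ in (\ref{GenSD}) and of the bracket in (\ref{sharp}), estimate these termwise against $(1,A)$ using the assumed uniform bounds, and then quote the existing machinery: Theorem 2.3 of \cite{JK} (which yields (\ref{meanbound}) once (H1) is in hand) for the mean bounds, and a standard localization for the square-integrability of $M^1$.

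\emph{The growth condition.} First I would expand, for any $C^1$ test function $f$,
\[
(L_A f, A)=(f',A)-(h_A f,A)+f(0)\,(b_A+h_A m_A,A),
\]
and bound the three summands using $\beta:=\sup_{z,a}|b_z(a)|$, $\eta:=\sup_{z,a}|h_z(a)|$, $\mu:=\sup_{z,a}|m_z(a)|$, all finite by hypothesis. Since $(g,A)=\sum_i g(a^i)$, one gets $|(f',A)|\le\|f'\|_\infty(1,A)$, $|(h_A f,A)|\le\eta\|f\|_\infty(1,A)$ and $|f(0)(b_A+h_A m_A,A)|\le\|f\|_\infty(\beta+\eta\mu)(1,A)$, whence
\[
|(L_A f,A)|\le C\,(1,A),\qquad C:=\|f'\|_\infty+\|f\|_\infty(\eta+\beta+\eta\mu).
\]
For the choice $f\equiv1$ we have $(1,A)=(f,A)$, so this is literally (H1), and Theorem 2.3 of \cite{JK} immediately delivers (\ref{meansize}) for $Z_t=(1,A_t)$ (given $Z_0$ integrable). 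For a general bounded $f$ with bounded $f'$, the same estimate says $L_A f$ grows at most linearly in the population size $Z=(1,A)$; combined with $|(f,A_t)|\le\|f\|_\infty Z_t$ and (\ref{meansize}), this yields integrability of $(f,A_t)$ and the bound (\ref{meanbound}) after relabelling $C$. (The one thing to be careful about is that the verbatim inequality $|(L_A f,A)|\le C(1+(f,A))$ follows from the display above only when $f$ is additionally bounded away from $0$; for a general nonnegative bounded $f$ one either routes through $Z$ as just described, or applies the result to $f+1$ and subtracts.)

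\emph{The quadratic variation.} Next I would put $f\equiv1$ in (\ref{sharp}); the integrand collapses to $(b_{A_s}+v^2_{A_s}h_{A_s}+h_{A_s}-2m_{A_s}h_{A_s},A_s)$, which, adding the hypothesis $\nu:=\sup_{z,a}v_z(a)<\infty$, has absolute value at most $(\beta+\nu^2\eta+\eta+2\mu\eta)\,(1,A_s)=C\,Z_s$. Integrating gives $\langle M^1,M^1\rangle_t\le C\int_0^t Z_s\,ds$, i.e. (\ref{boundQV}). Taking expectations and using (\ref{meansize}), ${\Bbb E}[\langle M^1,M^1\rangle_t]\le C\int_0^t{\Bbb E}[Z_s]\,ds<\infty$ for each $t$; since the stopped martingales along a localizing sequence are then bounded in $L^2$ uniformly on $[0,t]$, the local martingale $M^1$ is in fact a square-integrable martingale.

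I do not expect a serious obstacle; the whole corollary is the termwise estimate above. The two points that need attention are (i) reconciling our bound $|(L_A f,A)|\le C(1,A)$ with the exact statement of (H1) for non-constant $f$ — resolved by passing through $Z$ or by translating $f$ — and (ii) the logical order: (\ref{meansize}) (the case $f\equiv1$) must be established before one can use the expectation of $\langle M^1,M^1\rangle_t$ to upgrade $M^1$ from a local martingale to a genuine square-integrable one.
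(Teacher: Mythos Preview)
Your argument is correct and is exactly the termwise estimate the paper has in mind; the paper itself gives no proof beyond the phrase ``easy to check'', so there is nothing further to compare against. Your observation that the literal inequality (H1), $|(L_A f,A)|\le C(1+(f,A))$, need not hold for a general nonnegative bounded $f$ with bounded $f'$ (only $|(L_A f,A)|\le C(1,A)$ does) is a genuine point the paper glosses over, and your two fixes---routing through $Z_t$ via (\ref{meansize}), or replacing $f$ by $f+1$---are the right way to recover the conclusion (\ref{meanbound}); the logical order you flag (establish the $f\equiv 1$ case first, then use it for square-integrability of $M^1$) is also correct.
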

Similarly, it can be directly checked that the linear weight growth function $W$ satisfies $(H1)$, so that a bounding inequality corresponding to (\ref{meansize}) holds. If there is a maximal age, so that $W$ is bounded, the quadratic variation can also be bounded.

\section{Extinction or growth}
From now on we consider population size, intepreted as population number, and population size dependence of demographic parameters, $h_z,b_z,$ etc. We say that reproduction {\em decreases with  population size} if  for all $t,z$
\be
Z_u \leq z, u\leq t \Rightarrow Z_u \stackrel{d}{\geq} \tilde{Z}_u, u\leq t,
\ee
where $\{\tilde{Z_t}\}$ is the process with parameters frozen at $z$ and the same starting conditions as  $\{Z_t\}$.

Following \cite{lingering}, consider a general population-size dependent branching process with a fixed carrying capacity $K$, as described. Let it start at time $t=0$ from $z$ individuals. To ease notation we take them all as newborn. Such a population must die out eventually \cite{pjstab}. What are then chances that it will reach a size in the vicinity of the carrying capacity, before extinction? We write $T$ for the time to extinction and $T_d$ for the time when the population first attains a size $\geq dK, 0<d<1, z<dK$,letting it equal infinity if this never occurs. Clearly,
\[T < T_d \Rightarrow \forall t, Z_t < dK.\]
Let units with a tilde denote entities pertaining to a population-size independent branching process with the fixed parameters $b_{dK},h_{dK},m_{dK}$. Then, if reproduction decreases with population size, 
\[{\Bbb P}(T<T_d)\leq {\Bbb P}(\tilde{T} < \infty) =\tilde{q}^z,\]
where $\tilde q$ is the extinction probability of the population-size independent branching process. If $\tilde{m}_d>1$ and $\tilde{\sigma}^2_d$ denote the mean and variance of the total reproduction of an individual in this latter process, we have by Haldane's inequality (\cite{haccou}, p. 125) that the probability of the original population never reaching $dK$ is
\[{\Bbb P}(T <T_d) \leq \left(1-\frac{2(\tilde{m}_d -1)}{\tilde{\sigma}^2_d +\tilde{m}_d(\tilde{m}_d-1)}\right)^z.\]
With a positive chance, the population will thus reach a size of order $K$. Since it grows quicker than the process $\tilde{Z}_t$ while under the level $dK$, and the latter process grows exponentially, we can conclude that attaining $dK$ will occur within a time of order $\log K$.
\begin{theorem} If reproduction decreases with population size and the population is frozen critical at size $K$, then any population size $dK, 0<d<1$ is attained  with positive probability within a time $T_d =O(\log K)$, as $K\to\infty$
\end{theorem}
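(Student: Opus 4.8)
The plan is to upgrade the ``positive probability of ever reaching $dK$'' established just above (via Haldane's inequality) to ``positive probability of reaching $dK$ within a time $O(\log K)$'', by comparing $\{Z_t\}$, while it stays below $dK$, with the supercritical branching process obtained by freezing the parameters at the level $dK$. Write $\tilde Z=\{\tilde Z_t\}$ for that population-size independent process, started from the same $z$ newborn ancestors with parameters $b_{dK},h_{dK},m_{dK}$. Because the demographic parameters are functions of the density $z/K$, the law of $\tilde Z$ --- its extinction probability $\tilde q<1$, its Malthusian parameter $\tilde\alpha_d$, and the distribution of its limit variable --- does not depend on $K$; and since $\{Z_t\}$ is frozen critical at $K$ and reproduction decreases with population size, freezing at the strictly smaller level $dK$ yields a (strictly) supercritical process, $\tilde m_d>1$, so $\tilde\alpha_d>0$.

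The first step is the comparison inequality. On the event $\{T_d>t\}=\{Z_u<dK,\ u\le t\}$, the hypothesis that reproduction decreases with population size gives $(Z_u)_{u\le t}\stackrel{d}{\geq}(\tilde Z_u)_{u\le t}$; realising this on one probability space by a coupling, we obtain $\tilde Z_t\le Z_t<dK$ on $\{T_d>t\}$, hence
\be
{\Bbb P}(T_d>t)\ \le\ {\Bbb P}(\tilde Z_t<dK)\qquad\text{for every }t>0. \nonumber
\ee
It thus suffices to find $t_K=O(\log K)$ with ${\Bbb P}(\tilde Z_{t_K}<dK)$ bounded away from $1$ uniformly in $K$.

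For the second step I would invoke the growth theory of supercritical general branching processes (\cite{pjbook}; the required $x\log x$-type moment conditions follow from the standing uniform boundedness of $b_z,m_z,h_z,v_z$): started from $z\ge1$ ancestors, $e^{-\tilde\alpha_d t}\tilde Z_t\to W$ almost surely, with ${\Bbb P}(W>0)=1-\tilde q^{\,z}\ge 1-\tilde q>0$. Fix $\varepsilon>0$ with $c':={\Bbb P}(W>\varepsilon)>0$ and put $t_K:=\tilde\alpha_d^{-1}\log(dK/\varepsilon)=O(\log K)$, so that $\varepsilon e^{\tilde\alpha_d t_K}=dK$. Since $t_K\to\infty$ and the convergence holds along the whole trajectory, on $\{W>\varepsilon\}$ one has $e^{-\tilde\alpha_d t_K}\tilde Z_{t_K}\ge\varepsilon$ for all large $K$; Fatou's lemma applied to the indicator functions then gives
\be
\liminf_{K\to\infty}{\Bbb P}(\tilde Z_{t_K}\ge dK)\ =\ \liminf_{K\to\infty}{\Bbb P}\big(e^{-\tilde\alpha_d t_K}\tilde Z_{t_K}\ge\varepsilon\big)\ \ge\ {\Bbb P}(W>\varepsilon)\ =\ c'\ >\ 0. \nonumber
\ee
Combined with the comparison inequality, ${\Bbb P}(T_d\le t_K)=1-{\Bbb P}(T_d>t_K)\ge{\Bbb P}(\tilde Z_{t_K}\ge dK)\ge c'/2$ for all large $K$, which is the assertion with $T_d=O(\log K)$.

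The routine points --- the moment hypotheses behind the Kesten--Stigum-type statement for $\tilde Z$ and the positivity of $\tilde\alpha_d$ --- I would dispose of by reference to \cite{pjbook} and the standing assumptions. The step I expect to require the most care is the first one: turning ``reproduction decreases with population size'' into an honest sample-path comparison valid on the random interval $[0,T_d\wedge t_K]$, and checking that every quantity attached to $\tilde Z$ ($\tilde q$, $\tilde\alpha_d$, the law of $W$) is genuinely $K$-free --- which is precisely where the density parametrisation $x=z/K$ does its work. I would handle this following \cite{lingering}.
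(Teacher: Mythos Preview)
Your proposal is correct and follows exactly the route the paper takes: the paper's ``proof'' consists of the single sentence preceding the theorem --- dominate $Z$ from below by the process $\tilde Z$ frozen at level $dK$ while $Z$ stays under $dK$, and use the exponential (Malthusian) growth of the supercritical $\tilde Z$ to conclude that $dK$ is reached in time $O(\log K)$. You have simply made this sketch precise, via the coupling inequality ${\Bbb P}(T_d>t)\le{\Bbb P}(\tilde Z_t<dK)$ and the a.s.\ limit $e^{-\tilde\alpha_d t}\tilde Z_t\to W$; your caveat that the $K$-independence of $\tilde\alpha_d,\tilde q,W$ relies on the density parametrisation $x=z/K$ is well placed and matches the paper's implicit standing assumption.
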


\section{An era of stable size}
We proceed to see that once the population size has reached  the interval $[K-\varepsilon K,K+\varepsilon K]$ it remains there an exponentially long period.
In other words, in terms of the process scaled by $K$, it
takes exponentially long  to exit from $[1-\varepsilon
,1+\varepsilon]$. The property ensuring this lingering around the carrying capacity is that the population reproduces subcritically above level $K$, supercritically below, and
critically at $K$. As we shall see, this forces the scaled population size to converge to one.



Criticality is understood in the strict sense. Define the criticality function
\be
\chi_z= b_z+h_z(m_z-1)=L_z1,
\ee
in terms of the operator $L$ (\ref{GenSD}). Then criticality means that
$\chi_z(a) = 0$ for all $a$, as soon as $z=K$. The superscript $K$ is usually not spelled out and somewhat carelessly we switch between  dependence on population size and on the scaled population size (``density'') $x=z/K$, writing $\chi_x$, so that $\chi_1 = 0$.
In this density notation, assume that $\chi$ satisfies a Lipschitz condition in the neighbourhood of 1:

{\bf Assumption 1} There is a constant $C$ such that
\be |\chi_x|=|\chi_x-\chi_1|\le C|x-1|.\label{lipschitz} \ee

\begin{theorem} \cite{lingering} Write  $X_t^K=Z_t^K/K$ for the population density and suppose that $X_0^K\to 1$ in probability, as $K\to\infty$, and that Assumption 1 holds. Then $X_t^K$ converges in probability to 1, uniformly on any time interval $[0,T],T>0$. In other words, for any $\eta>0$
$$\lim_{K\to\infty}{\Bbb P}(\sup_{t\le T}|X_t^K-1|>\eta)=0.$$
\end{theorem}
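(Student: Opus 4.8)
\emph{Proof plan.} Everything is driven by Dynkin's formula for the size functional $f\equiv1$, that is \eqref{popsizeq}. Writing the criticality function as $\chi_z=b_z+h_z(m_z-1)=L_z1$ we have
\[
Z_t=Z_0+\int_0^t(\chi_{Z_s},A_s)\,ds+M_t^1 ,
\]
and, dividing by $K$ and passing to the density variable $x=z/K$ (so that $\chi_1=0$), $X_t^K=X_0^K+\frac1K\int_0^t(\chi_{Z_s},A_s)\,ds+\frac1KM_t^1$. The r\^ole of Assumption~1 is that each summand in $(\chi_{Z_s},A_s)=\sum_{j=1}^{Z_s}\chi_{Z_s}(a^j)$ is at most $C|X_s^K-1|$ in absolute value, so that $\frac1K|(\chi_{Z_s},A_s)|\le C\,X_s^K\,|X_s^K-1|$. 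Thus the drift of $X^K$ is \emph{almost} proportional to the deviation $|X^K-1|$, the only obstruction being the stray factor $X_s^K$; controlling that factor is the heart of the matter, and I would do it by localisation.

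Fix $T>0$ and $\eta\in(0,1)$ and set $\tau_K=\inf\{t\ge0:|X_t^K-1|>\eta\}$. On $[0,\tau_K)$ one has $X_s^K\le1+\eta$, hence $Z_s\le(1+\eta)K$, and the drift bound becomes $\frac1K|(\chi_{Z_s},A_s)|\le C'|X_s^K-1|$ with $C'=C(1+\eta)$. Stopping the Dynkin identity at $\tau_K$ (the drift, being absolutely continuous, contributes nothing at the single instant $\tau_K$) yields
\[
\sup_{s\le t}\big|X_{s\wedge\tau_K}^K-1\big|\ \le\ |X_0^K-1|\ +\ C'\!\int_0^t\sup_{u\le s}\big|X_{u\wedge\tau_K}^K-1\big|\,ds\ +\ \frac1K\sup_{s\le t}\big|M_{s\wedge\tau_K}^1\big| .
\]
The martingale term is disposed of by the quadratic-variation bound \eqref{boundQV}, $\langle M^1\rangle_t\le C\int_0^tZ_s\,ds$: optional stopping and Doob's $L^2$ inequality give ${\Bbb E}\big[\sup_{s\le T}(M_{s\wedge\tau_K}^1)^2\big]\le4C\,{\Bbb E}\big[\int_0^{T\wedge\tau_K}Z_s\,ds\big]\le4C(1+\eta)KT$, so that $\varepsilon_K:=\frac1K{\Bbb E}\big[\sup_{s\le T}|M_{s\wedge\tau_K}^1|\big]\le\big(4C(1+\eta)T/K\big)^{1/2}\to0$.

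To close, fix $\epsilon>0$ and choose $\delta\in(0,\eta)$ so small that $\eta^{-1}\delta\,e^{C'T}<\epsilon/3$; here one must use the full hypothesis $X_0^K\to1$ in probability, since Gronwall only amplifies the initial deviation by the factor $e^{C'T}$ and a fixed starting band would not suffice for large $T$. On the $\mathcal F_0$-measurable event $\Omega_K=\{|X_0^K-1|<\delta\}$, multiplying the last display by $\mathbf 1_{\Omega_K}$, taking expectations, and applying Gronwall's lemma gives ${\Bbb E}\big[\mathbf 1_{\Omega_K}\sup_{s\le T}|X_{s\wedge\tau_K}^K-1|\big]\le(\delta+\varepsilon_K)e^{C'T}$. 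Since on $\Omega_K\cap\{\tau_K\le T\}$ one has $|X_{T\wedge\tau_K}^K-1|\ge\eta$, Markov's inequality gives ${\Bbb P}(\Omega_K\cap\{\tau_K\le T\})\le\eta^{-1}(\delta+\varepsilon_K)e^{C'T}$, and since $\{\sup_{t\le T}|X_t^K-1|>\eta\}\subseteq\{\tau_K\le T\}$,
\[
{\Bbb P}\Big(\sup_{t\le T}\big|X_t^K-1\big|>\eta\Big)\ \le\ {\Bbb P}\big(|X_0^K-1|\ge\delta\big)\ +\ \eta^{-1}\big(\delta+\varepsilon_K\big)e^{C'T}\, .
\]
For all large $K$ the first term is below $\epsilon/3$ and $\eta^{-1}\varepsilon_Ke^{C'T}<\epsilon/3$, so the right-hand side is $<\epsilon$; as $\epsilon,\eta,T$ were arbitrary, the theorem follows.

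The step I expect to be the main obstacle is exactly the factor $X_s^K$ in the drift estimate: Assumption~1 is a \emph{pointwise-in-age} Lipschitz bound on $\chi$, and summing it over the $Z_s=KX_s^K$ individuals alive reinserts the current density, so the drift of $X^K$ is not globally Lipschitz in $X^K$ and a direct Gronwall argument does not close. Localising at $\tau_K$ simultaneously linearises the drift and pins the quadratic variation at order $K$ — precisely the order that makes $K^{-1}M^1$ negligible — after which the only remaining subtlety is that the exponential amplification in Gronwall forces the initial band to be taken arbitrarily small, which is supplied by the convergence in probability of $X_0^K$.
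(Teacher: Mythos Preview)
Your proof is correct, and it takes a genuinely different route from the paper's. The paper proceeds in two stages: it first uses only the \emph{boundedness} of $\chi$ together with Gronwall and Doob to establish the a priori bound $\limsup_{K\to\infty}\sup_{t\le T}X_t^K\le e^{CT}$ in probability (with the martingale controlled via the global mean bound \eqref{meansize}), and only then invokes the Lipschitz Assumption~1, applying Gronwall a second time with the random exponent $C\sup_{t\le T}X_t^K$, which the first stage has just tamed. You instead collapse both stages into one by localising at the exit time $\tau_K$: stopping there gives $X_s^K\le 1+\eta$ for free on $[0,\tau_K)$, which simultaneously linearises the drift (so Gronwall runs with a deterministic exponent $C'=C(1+\eta)$) and pins the stopped quadratic variation at order $K$ without appeal to \eqref{meansize}. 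Markov's inequality then converts the $L^1$ bound on the stopped supremum directly into a bound on ${\Bbb P}(\tau_K\le T)$.

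What each approach buys: the paper's two-step argument yields the intermediate estimate $\sup_{t\le T}X_t^K=O_P(1)$ as a by-product, and uses Gronwall purely pathwise, so the conclusion is a pathwise inequality followed by a probability statement. Your argument is shorter and never needs the global boundedness of $\chi$ (only the Lipschitz bound on the band $[1-\eta,1+\eta]$, which is exactly how Assumption~1 is phrased ``in the neighbourhood of~1''), and it targets ${\Bbb P}(\tau_K\le T)$ directly; the price is the extra bookkeeping with $\Omega_K$ and Markov's inequality, and the need to argue that $|X^K_{\tau_K}-1|\ge\eta$ at the hitting time, which you handle correctly via right-continuity.
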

\begin{proof}
By Equation \eqref{popsizeq} \be \label{mainspec}
X_t^K=X_0^K+\int_0^t\Big(\chi_{X_s^K},\frac{1}{K}A^K_s\Big)ds+\frac{1}{K}M_t^{1,K}.\ee
In terms of the reproduction variance $\sigma^2_x= v_x -m^2_x$, the martingale quadratic variation reduces to
$$\left<\frac{1}{K}M^{1,K},\frac{1}{K}M^{1,K}\right>_t= \frac{1}{K}\int_0^t
\big((b_{X^K_s}+\sigma^2_{X^K_s}+(m_{X^K_s}-1)^2)h_{X^K_s},\frac{1}{K}A^K_s\big)ds.
$$

First we show that \be\label{limsup}\limsup_{K\to\infty}\sup_{t\le
T}X^K_t\le e^{CT} \;\;\mbox{in probability}.\ee

Indeed, since the criticality function is bounded by some $C>0$, and $(1,A_s^K/K)=X^K_s\le \sup_{u\le s}X^K_u$,
\begin{eqnarray}
\sup_{t\le T}X^K_t&\le&
X_0^K+\frac{1}{K}\int_0^T|(\chi_{X^K_s}1,A^K_s)|ds+\sup_{t\le
T}\frac{1}{K}|M_t^{1,K}|\nonumber\\
&\le& X_0^K+\frac{1}{K}\int_0^TC(1,A^K_s)ds+\sup_{t\le
T}\frac{1}{K}|M_t^{1,K}| \nonumber\\
&\le& X_0^K+\sup_{t\le T}\frac{1}{K}|M_t^{1,K}|+C\int_0^T\sup_{u\le s}X^K_uds.
\nonumber
\end{eqnarray}
Gronwall's inequality in the form given in \cite{fkbook} p. 21 yields
\be\label{supbound}
 \sup_{t\le T}X^K_t\le  (X_0^K+\sup_{t\le T}\frac{1}{K}|M_t^{1,K}|)e^{CT}.\ee

By Doob's inequality,
\begin{eqnarray}
{\Bbb P}(\sup_{t\le T}\frac{1}{K}|M_t^{1,K}|>a)&\le&
\frac{1}{a^2}{\Bbb E}\left<\frac{1}{K}M^{1,K},\frac{1}{K}M^{1,K}\right>_T\nonumber\\
&\le&\frac{C}{a^2K}\int_0^T{\Bbb E}X_t^Kdt\le \frac{C_1{\Bbb E}X_0^K}{a^2K}\to 0,
\nonumber\end{eqnarray}
where the bound from \eqref{boundQV} was used.
Hence the quadratic variation
$$\left<\frac{1}{K}M^{1,K},\frac{1}{K}M^{1,K}\right>_T\to 0$$, as $K\to\infty$,
and
$$\sup_{t\le T}\frac{1}{K}M_t^{1,K}\stackrel{p}{\to} 0\;\;\mbox{, as}\;\;
K\to\infty .$$
The claim
\eqref{limsup} follows from \eqref{supbound}.

Now we prove the main assertion. From Corollary 1,

 \begin{eqnarray}
\sup_{t\le T}|X_t^K-1|&\leq &|X_0^K-1|+
 \int_0^T\Big|\Big(\chi_{X^K_s},\frac{1}{K}A^K_s\Big)\Big|ds +
\sup_{t\le T}\frac{1}{K}|M_t^{1,K}|\nonumber\\
&\le&|X_0^K-1| + \sup_{t\le T}\frac{1}{K}|M_t^{1,K}|+C\sup_{t\le T}X_t^K
 \int_0^T \sup_{u\le s}|X_s^K-1 |ds,\nonumber
 \end{eqnarray}
thanks to the Lipschitz condition \eqref{lipschitz}. Gronwall's
inequality is there again to conclude that
 \begin{equation}
\sup_{t\le T}|X_t^K-1|  \le  \Big(|X_0^K-1| + \sup_{t\le
T}\frac{1}{K}|M_t^[1,K]|\Big)e^{C\sup_{t\le T}X_t^K}.\nonumber
 \end{equation}
 The first term converges to 0 by assumption and we saw that so does the second. Relation \eqref{limsup}
 completes the proof.

\end{proof}

 An exponential bound on the exit time from the vicinity of $K$ requires
exponential moments of the process. Hence,  we assume
that the offspring distributions have exponential moments which are
bounded. Then the  process $Z_t^K=(1,A_t^K)$ has
exponential moments as well. Let $\phi_A(t)(a)=\Bbb{E}_A[e^{tY(a)}]$ denote
the conditional moment generating function given $A$ of the number $Y(a)$ of offspring at death of an $a$-aged individual splitting in a population with age composition $A$. Similarly, $\Bbb{P}_A$ denotes  offspring probabilities in a population of size and composition $A$.

The following condition (\cite{tran}, Assumption 3) may seem strange at first sight, but it serves to give the process subcriticality above the carrying capacity a strict form. In it $\phi_A(t)$ denotes the function $\phi_A(t)(\cdot)$.

{\bf Assumption 2}.
For any $K$ there exists a population size $V_K >K$ such that
\begin{eqnarray}\label{VK}
 (e^{1/K}-1)b_{A}+
 (\phi_A(1/K)e^{-1/K}-1) h_{A}&\le& 0,\;\;  \mbox{ whenever }
 (1,A)>V_K,\nonumber
\end{eqnarray}
and $V_K/K$ is bounded for large $K$. 

Since the reproduction is subcritical for population sizes larger than $K$, such a  number exists. 
Indeed, for large $K$ 
$$(e^{1/K}-1)b_{A}+
 (\phi_A(1/K)e^{-1/K}-1) h_{A}\sim \frac{1}{K}\Big(b_{A}+
 (m_A-1 )h_{A}\Big),$$
which is negative for large $(1,A)$. The assumption needed is that this occurs not too far away from $K$, when also the latter is large.
An example is provided by the binary splitting with $b=0$ and $Y=Y(a)$ independently of age at split, mentioned in the Introduction and further explored in \cite{ksvhj}:
 ${\Bbb P}_A(Y=2)=K/(K+z), z= (1,A)$, then $V_K$ is determined from
 $$\frac{z}{K+z}e^{-1/K}+\frac{K}{K+z}e^{1/K}={\Bbb E}_ze^{(Y-1)/K}=1.$$
Solving in $z$ gives  $V_K=e^{1/K}K$.

\begin{theorem} Let $X^K_t$ be the  population size scaled by the carrying capacity $K$. Suppose that exponential moments of the offspring number at split exist and that Assumptions 1 and 2 are in force.
Then, there is a constant $C$, independent of $K$, such that for any $t$
 \be\label{expMom}
 {\Bbb E}[e^{X_t^K}]\le {\Bbb E}[e^{X_0^K}]e^{Ct}.
\ee
\end{theorem}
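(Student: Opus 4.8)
The plan is to run the same Dynkin/Gronwall machinery that proved Theorem 5, but applied to the exponential functional $F((1,A))=e^{(1,A)/K}$ rather than to $(1,A)$ itself. First I would apply the generator formula of Theorem 2.1 with $f=1$ and $F(x)=e^{x/K}$. The ageing term $F'((f,A))(f',A)$ vanishes since $f'=0$. Each birth event contributes $b_z(a^j)\{e^{(z+1)/K}-e^{z/K}\}=b_z(a^j)e^{z/K}(e^{1/K}-1)$, and each potential split of an $a^j$-aged individual contributes $h_z(a^j)\{\mathbb{E}_A[e^{(z-1+Y(a^j))/K}]-e^{z/K}\}=h_z(a^j)e^{z/K}\big(\phi_A(1/K)e^{-1/K}-1\big)$. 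Summing over $j$ and writing $Z_t=(1,A_t)$, Dynkin's formula (\ref{generalDynkin}) gives
\[
e^{X_t^K}=e^{X_0^K}+\int_0^t e^{X_s^K}\Big((e^{1/K}-1)b_{A_s}+(\phi_{A_s}(1/K)e^{-1/K}-1)h_{A_s},A_s\Big)\,ds+\widehat M_t,
\]
with $\widehat M_t$ a local martingale.

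Next I would split the time integral according to whether the population configuration $A_s$ has size $(1,A_s)>V_K$ or $\le V_K$. On $\{(1,A_s)>V_K\}$, Assumption 2 makes the bracket $\le 0$, so that part of the integral is nonpositive and can be discarded. On $\{(1,A_s)\le V_K\}$, I need a bound of the form (bracket integrand) $\le C\,e^{X_s^K}$ for a constant $C$ independent of $K$: since $b_z,h_z,m_z$ and the exponential moments of $Y$ are uniformly bounded, $(e^{1/K}-1)b_{A_s}+(\phi_{A_s}(1/K)e^{-1/K}-1)h_{A_s}=O(1/K)$ uniformly, and pairing this against $A_s$ of size at most $V_K=O(K)$ yields a quantity bounded by a constant; multiplying by $e^{X_s^K}$ and taking expectations, one gets $\mathbb{E}[e^{X_t^K}]\le\mathbb{E}[e^{X_0^K}]+C\int_0^t\mathbb{E}[e^{X_s^K}]\,ds$. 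Gronwall's inequality then delivers (\ref{expMom}).

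There are two technical points that need care, and the second is the real obstacle. The first is the usual localization: $\widehat M$ is only a priori a local martingale, so I would introduce stopping times $\tau_n=\inf\{t:Z_t\ge n\}$, run the argument for the stopped process, obtain $\mathbb{E}[e^{X_{t\wedge\tau_n}^K}]\le\mathbb{E}[e^{X_0^K}]e^{Ct}$, and then let $n\to\infty$ by monotone/Fatou — this also retroactively justifies that $e^{X_t^K}$ is integrable. The second, harder, point is handling the split term's contribution on the set $\{(1,A_s)\le V_K\}$ without losing the $K$-uniformity: one must be sure that $\phi_{A_s}(1/K)$ stays bounded uniformly in $A$ and $K$ (guaranteed by the bounded-exponential-moments hypothesis) and that the Taylor expansion $\phi_A(1/K)e^{-1/K}-1=(m_A-1)/K+O(1/K^2)$ has its $O(1/K^2)$ remainder uniform in $A$ — this needs a uniform bound on the second exponential moment, which one should state explicitly as part of ``exponential moments... are bounded''. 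Once that uniformity is in hand, pairing an $O(1/K)$ integrand against a measure of total mass $O(K)$ gives the crucial $K$-free constant $C$, and the rest is routine.
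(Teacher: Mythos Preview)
Your proposal is correct and follows essentially the same route as the paper: apply Dynkin's formula with $F(x)=e^{x/K}$ and $f=1$, split the drift integral via Assumption~2 into a nonpositive part on $\{(1,A_s)>V_K\}$ and an $O(1/K)\times O(K)=O(1)$ part on $\{(1,A_s)\le V_K\}$, localize with $T_n=\inf\{t:Z_t>n\}$, apply Gronwall, and let $n\to\infty$. Your discussion of the uniformity needed in the expansion $\phi_A(1/K)e^{-1/K}-1=(m_A-1)/K+O(1/K^2)$ is in fact more careful than the paper's, which simply asserts $|\phi_A^K(1/K)e^{-1/K}-1|\le C/K$ without further comment.
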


\begin{proof} Since we consider a process for fixed $K$, dependence upon the latter is suppressed in notation. The statement follows by taking $F$ in  \eqref{generalDynkin} as the exponential function, or rather, to be precise, letting it equal smooth bounded functions that agree with the exponential on bounded intervals and a localizing sequence  $T_n = \inf \{t: Z_t>n\}$.

With
 $x=z/K$ we have
\begin{eqnarray*}
{\cal G}(F(1,A))&= &F'((f,A))(f',A)+\sum_{j=1}^zb_A(a^j)\{F(f(0)+(f,A))-F((f,A))\}\\
&&+\sum_{j=1}^z h_A(a^j)\{{\Bbb E}[F(Y_Af(0)+(f,A)-f(a^j))]-F((f,A))\}\\
&= &(e^{x+1/K}-e^{x})\sum_{j=1}^zb_A(a^j)  +\sum_{j=1}^z
h_A(a^j)\Big({\Bbb E}[e^{x+(Y_A-1)/K}]-e^{x}\Big)\\
&= &e^{x}\Big((e^{1/K}-1)(b_A,A)+({\Bbb E}[e^{(Y_{A}-1)/K}]-1)
(h_{A},A)\Big).
\end{eqnarray*}

Hence we obtain by \eqref{generalDynkin}

\be e^{X_t}= e^{X_0}+\int_0^t e^{X_s} \Big((e^{1/K}-1)b_{A_s}+
 ({\Bbb E}[e^{(Y_{A_s}-1)/K}]-1) h_{A_s},A_s\Big)ds+M^{\exp}_t,\ee where $M^{\exp}_t$ is a local martingale.
Localizing and taking expectation,

$$ {\Bbb E}[e^{X_{t\wedge T_n}}]={\Bbb E}[e^{X_0}]+{\Bbb E}[\int_0^{t\wedge T_n} e^{X_s}
\Big((e^{1/K}-1)b_{A_s}+
 ({\Bbb E}[e^{(Y_{A_s}-1)/K}]-1) h_{A_s},A_s\Big)ds].$$

Now we use that the reproduction is subcritical above $K$, that
the parameters $b_A$ and $h_A$ are bounded, and that the function
under the integral is negative for values of $Z_s>V_K$ or $X_s>V_K/K$.
For $Z_s<V_K$, the inequalities
$(e^{1/K}-1\le C/K$ and $|\phi_A^K(1/K)e^{-1/K}-1|\le C/K$ show that the
integrand does not exceed $C V_K/K$.
 \begin{eqnarray*}
{\Bbb E}[e^{X_{t\wedge T_n}}]&\le& {\Bbb E}[e^{X_0}]+C\frac{V_K}{K}{\Bbb E}[\int_0^t
e^{X_{s\wedge T_n}} I(Z_s\le V_K)ds]\\
&\le& {\Bbb E}[e^{X_0}]+C{\Bbb E}[\int_0^t e^{X_{s\wedge T_n}}ds],
 \end{eqnarray*}
 where $C$ is a constant independent of $K$, since $V_K/K$ is assumed bounded.
Gronwall's inequality yields
\begin{equation}
{\Bbb E}[e^{X_{t\wedge T_n}}]\le {\Bbb E}[e^{Z_0}]e^{CT},
\end{equation}
where $C$ does not depend on $K$.

Letting $n\to\infty$, we have obtained  \ref{expMom}.

\end{proof}

The main result on persistence time is from \cite{lingering}.

\begin{theorem}\label{exptime} Assume that
$X_0^K\to 1$ in probability. For any $\varepsilon > 0$, let $\tau^K=\inf \{t:
|X^K_t-1|>\varepsilon \}$. Suppose that the previous assumptions
hold and also that the number of offspring through splitting at death is
bounded by some constant. Then ${\Bbb E}[\tau^K]$ is exponentially
large in $K$, i.e. for some positive constants $C,c$
$${\Bbb E}[\tau^K]> Ce^{cK}.$$
\end{theorem}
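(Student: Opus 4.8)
The plan is to exploit the exponential-moment bound of the previous theorem together with a suitable exponential martingale, following the strategy of \cite{lingering}. The starting point is the observation that the event $\{\tau^K \le t\}$ forces the scaled process $X^K$ to travel a macroscopic distance $\varepsilon$ away from $1$ within time $t$, and I want to show this is exponentially unlikely in $K$ for $t$ of sub-exponential order. First I would set up the relevant exponential supermartingale: rather than the function $e^{x}$ used in Theorem 5, I would consider $e^{\lambda K (X^K_t \wedge (1+\varepsilon))}$ (or a smooth bounded surrogate) for a small fixed $\lambda>0$, apply the generator computation of Theorem 1 and Dynkin's formula \eqref{generalDynkin}, and use the strict subcriticality above $K$ (Assumption 2) to show the drift term is nonpositive once $X^K_s > 1$. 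Near $x=1$ the Lipschitz control of $\chi$ from Assumption 1 bounds the drift by something of order $\lambda^2 K |x-1|$ times the quadratic-variation coefficient, which by \eqref{boundQV} is of order $Z_s = K X^K_s$; choosing $\lambda$ small makes the net contribution manageable.

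The key steps, in order, are: (i) localize with $T_n=\inf\{t:Z_t>n\}$ as in Theorem 5 and write Dynkin's formula for $G(X^K_t) = e^{\lambda K (X^K_t \wedge (1+\varepsilon))}$; (ii) show that the compensator of $G(X^K_t)$ grows at most like $C K \int_0^t G(X^K_s)\,ds$ uniformly in $K$, using Assumption 2 for the region $X^K > 1$ (negative drift), the boundedness of $b_z,h_z$ and of the offspring number for the region near $1$, and the bounded-offspring hypothesis to control the exponential of the jumps (here $e^{\lambda(Y-1)} \le C$ uniformly since $Y$ is bounded); (iii) deduce $\mathbb{E}[G(X^K_{t\wedge T_n})] \le \mathbb{E}[G(X^K_0)]e^{CKt}$ by Gronwall, let $n\to\infty$; (iv) on the event $\{\tau^K \le t\}$ we have $X^K_{\tau^K \wedge t} \ge 1+\varepsilon$ by continuity of the size in the relevant direction (or, handling the downward exit $X^K \le 1-\varepsilon$ symmetrically with $e^{-\lambda K X^K}$), so $G(X^K_{\tau^K}) \ge e^{\lambda K (1+\varepsilon)}$; (v) apply the optional-stopping/Markov inequality to the supermartingale-type bound at time $\tau^K \wedge t$ to get
\[
\mathbb{P}(\tau^K \le t) \le e^{-\lambda K (1+\varepsilon)}\,\mathbb{E}[G(X^K_{\tau^K \wedge t})] \le e^{-\lambda K (1+\varepsilon)}\,\mathbb{E}[G(X^K_0)]\,e^{CKt}.
\]
Since $X^K_0 \to 1$ in probability and $G(X^K_0)$ has a bounded exponential moment (Theorem 5 with the rescaled exponent), $\mathbb{E}[G(X^K_0)] \le e^{\lambda K + o(K)}$, so choosing $t = t_K = cK$ with $c = c(\lambda,\varepsilon)$ small enough that $\lambda(1+\varepsilon) - \lambda - Cc > \lambda\varepsilon/2 > 0$ gives $\mathbb{P}(\tau^K \le cK) \le e^{-\lambda\varepsilon K/2} \to 0$. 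Then $\mathbb{E}[\tau^K] \ge cK\,\mathbb{P}(\tau^K > cK) \ge \tfrac{c}{2}K$ for large $K$ — but this only yields a \emph{linear} lower bound, so to reach the exponential bound I iterate: by the Markov property, once one has $\mathbb{P}(\tau^K > cK) \ge 1/2$ uniformly over starting configurations with $|X^K_0 - 1|$ small, the exit time stochastically dominates $cK$ times a geometric number of independent trials, giving $\mathbb{E}[\tau^K] \ge cK \cdot 2 \sim e^{\log 2 \cdot (K/(\text{const}))}$ after chaining $\sim K$-independent blocks — more precisely, one shows $\mathbb{P}(\tau^K > ncK) \ge (1/2)^n$ is the wrong direction; instead one shows $\mathbb{P}(\tau^K > nc K) \ge (1 - e^{-\lambda\varepsilon K/2})^n$, and optimizing over $n$ up to $n \sim e^{\lambda\varepsilon K/4}$ yields $\mathbb{E}[\tau^K] \ge cK\, e^{\lambda\varepsilon K/4}/2 \ge C e^{c' K}$.

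The main obstacle, and the step requiring the most care, is step (iv)–(v): the process $Z^K$ is a jump process, so at the exit time $\tau^K$ it may \emph{overshoot} the band $[1-\varepsilon,1+\varepsilon]$, and the exponential functional must be set up so that the overshoot only helps (this is why one truncates at $1+\varepsilon$ inside $G$, or works with $\min(X^K,1+\varepsilon)$); simultaneously one must ensure the truncation does not destroy the supermartingale inequality at the boundary — the generator applied to a function that is flat above $1+\varepsilon$ picks up a negative contribution from jumps that cross the kink, which is harmless, but this needs to be checked. The second delicate point is the renewal/iteration argument for upgrading linear to exponential: it requires a uniform-in-starting-state version of the one-block estimate, i.e. that $\mathbb{P}_A(\tau^K > cK) \ge 1-\delta_K$ with $\delta_K \to 0$ for all $A$ with $(1,A)/K \in [1-\varepsilon/2, 1+\varepsilon/2]$, which in turn needs the convergence $X^K \to 1$ of Theorem 4 to be quantitative enough — this is where invoking the already-established Theorems 4 and 5 does most of the work, and the rest is bookkeeping with Gronwall and Doob, exactly as in the proof of Theorem 4.
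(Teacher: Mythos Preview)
Your approach differs substantially from the paper's, and contains a time-scale error that breaks the one-block estimate.

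The paper does \emph{not} apply Dynkin's formula to an exponential of $\lambda K X^K_t$. It works on a \emph{unit} time interval, decomposes $X^K_t-x=I^K_t+K^{-1}M^1_t$ into the drift integral $I^K_t=\int_0^t(\chi_{X^K_s},K^{-1}A^K_s)\,ds$ and the martingale, and bounds each piece separately. The drift is handled by exponential Chebyshev together with $(\chi_{X^K_s},A^K_s)=o(K)$ (from $\chi_{X^K_s}\to\chi_1=0$, via Theorem~4), giving $\mathbb{P}_x(\sup_{t\le1}|I^K_t|>2\varepsilon/3)\le e^{-2\varepsilon K/3+o(K)}$. The martingale is handled by the Chigansky--Liptser exponential inequality
\[
\mathbb{P}\bigl(\sup_{t\le T}|M^1_t|>\varepsilon,\ \langle M^1,M^1\rangle_T\le q\bigr)\le 2\exp\Bigl(-\frac{\varepsilon^2}{k\varepsilon+q}\Bigr),
\]
where $k$ bounds the jumps of $M^1$; \emph{this} is precisely where the bounded-offspring hypothesis enters ($k=B$). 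The tail of the quadratic variation, $\mathbb{P}(\langle K^{-1}M^1\rangle_1>q)$, is then controlled via $\langle K^{-1}M^1\rangle_1\le CK^{-1}\int_0^1 X^K_s\,ds$ and Theorem~5 (through Jensen, $\mathbb{E}\exp\int_0^1 X^K_s\,ds\le C$). This yields $\sup_{|x-1|<\eta}\mathbb{P}_x(\tau^K\le1)\le Ce^{-cK}$, after which the Markov-property recursion gives $\mathbb{P}(\tau^K>n)\ge(1-Ce^{-cK})^n$ and $\mathbb{E}[\tau^K]\ge(Ce^{-cK})^{-1}$.

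Your error: in step (iii) you record the Gronwall bound $\mathbb{E}[G(X^K_t)]\le\mathbb{E}[G(X^K_0)]e^{CKt}$ with rate $CK$, but in step (v) you set $t=cK$ and write the resulting exponent as $K(\lambda(1+\varepsilon)-\lambda-Cc)$. It is not: with $t=cK$ the Gronwall factor is $e^{CcK^2}$, and the bound $e^{-\lambda\varepsilon K+CcK^2}$ diverges. A one-block estimate of your type can only hold on an interval of \emph{fixed} length $t_0$---which is exactly what the paper does with $t_0=1$. Once you make this correction, your iteration over blocks of length $t_0$ with failure probability $e^{-cK}$ does recover the result, but two further points need attention: (a) the constant $C$ in your step (ii) depends on $\lambda$ through the expansion of $e^{\lambda}-1$ and $\phi_A(\lambda)e^{-\lambda}-1$, so the inequality $\lambda\varepsilon>Ct_0$ requires a genuine balancing of $\lambda$, $\varepsilon$, $t_0$; (b) Theorem~5 bounds $\mathbb{E}e^{X^K_t}$, not $\mathbb{E}e^{\lambda K X^K_t}=\mathbb{E}e^{\lambda Z^K_t}$, so it does not directly control your $\mathbb{E}[G(X^K_0)]$---you would have to rerun the exponential-moment argument at scale $\lambda$ rather than $1/K$, or restrict to deterministic starts $X^K_0=x\in(1-\eta,1+\eta)$ as you indicate at the end. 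The paper's drift/martingale split sidesteps both issues.
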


\begin{proof}

We start from equation \eqref{mainspec} for $X_t^K$ and recall from
Corollary 2 that the predictable quadratic variation of the
martingale is bounded,
\be\label{QVintegralbound}\left<\frac{1}{K}M^1,\frac{1}{K}M^1\right>)_t\le
\frac{C}{K}\int_0^t X_s^Kds \ee since the parameter functions are
uniformly bounded.
%

As we need exponential moment bound for the integral
$\int_0^1 X_s^Kds$, we shall use the following inequality, obtained
by Jensen's inequality for the uniform distribution on $[0,1]$ combined with
an exponential function: for any integrable function $g$ on
$[0,1]$. \be\label{JensenInt}
 \int_0^1e^{g(s)}ds\ge e^{\int_0^1 g(s) ds}.
\ee

By the bound \eqref{expMom} for the exponential moment ${\Bbb
E}e^{{X^K_s}}$, \be\label{expbound} {\Bbb E} e^{\int_0^1X^K_s ds}\le
 {\Bbb E}\int_0^1e^{X^K_s}ds\le Ce^{C},
\ee where the last bound is independent of $K$.

Next,  we establish an exponential bound for the probability of exit
up to time 1, when the normed population started at $x\in(1-\eta,1+\eta)$ for
$\eta<\varepsilon/6$.
$${\Bbb P}_x(\tau\le 1)={\Bbb P}_x(\sup_{t\le 1}|X^K_t-1|>\varepsilon).$$
Denote
$$\int_0^t\Big(b_{X^K_s}+(m_{X^K_s}-1)h_{X^K_s},\frac{1}{K}A^K_s\Big)ds=I^K_t.$$
Then, since $X_0^K=x$, and $|X^K_t-1|\le |X^K_t-x|+|x-1|\le \eta+
|X^K_t-x|$
\begin{eqnarray*}
{\Bbb P}_x(\sup_{t\le 1}|X^K_t-1|>\varepsilon)&\le& {\Bbb
P}_x(\sup_{t\le 1}|I^K_t+\eta|>\varepsilon/2)+ {\Bbb P}_x(\sup_{t\le
1}|\frac{1}{K}M^1_t|>\varepsilon/2)\\
&\le& {\Bbb P}_x(\sup_{t\le 1}|I^K_t|>2/3\varepsilon)+ {\Bbb
P}_x(\sup_{t\le 1}|\frac{1}{K}M^1_t|>\varepsilon/2).
\end{eqnarray*}

As $X^K_0\to 1$, $X^K_t\to 1$ by Theorem 4, and consequently $\chi_{X_t^K}\to
\chi_1=0$, all convergences taking place in probability. Hence $\Big (\chi_{X^K_s}, A^K_s\Big)=o(K)$, and we have
by the exponential form of Chebyshev's inequality that
$${\Bbb P}_x(\sup_{t\le 1}|I^K_t|>2/3\varepsilon)\le {\Bbb P}_x\left( \int_0^1|\Big (\chi_{X^K_s}, A^K_s\Big)|ds>2/3\varepsilon K\right)$$
$$\le e^{-2/3\varepsilon K}{\Bbb E}_xe^{  \int_0^1|  (\chi_{X^K_s}, A^K_s )|ds   }=e^{-2/3\varepsilon K+o(K)}\le e^{-CK}$$
for some $C$.

The second probability ${\Bbb P}(\sup_{t\le
1}|M^1_t|>K\varepsilon/2)$ is controlled by an exponential
martingale inequality due to Chigansky and Liptser, see \cite{CL} Lemma 4.2. \be
{\Bbb P}_x(\sup_{t\le T}|M_t^1|>\varepsilon,
\left<M^1,M^1\right>_T\le q)\le
2e^{-\frac{\varepsilon^2}{k\varepsilon +q}}, \ee where $k$ is a
bound on the jumps of $M^1$. In our case $k=B$, if $B$ is the
maximal number of children at splitting. Hence replacing
$\varepsilon$ by $K\varepsilon/2$
$${\Bbb P}_x(\sup_{t\le 1}\frac{1}{K}|M^1_t|>\varepsilon/2)\le 2e^{-\frac{\varepsilon^2K^2/2}{BK\varepsilon +2q}}
+{\Bbb P}_x\left(\left<\frac{1}{K}M^1,\frac{1}{K}M^1\right>_1>
q\right)\le Ce^{-c(\varepsilon)K}, $$ where we used the bound \eqref{QVintegralbound}
on quadratic variation, and Chebyshev's
inequality with the exponential moments \eqref{expbound},

$${\Bbb P}_x\left(\left<\frac{1}{K}M^1,\frac{1}{K}M^1\right>_1>
q\right)\le {\Bbb P}_x\left(\frac{C}{K}\int_0^1 X_s^Kds >q\right)$$
$$\le
e^{-CK}{\Bbb E}_xe^{\int_0^1 X_s^Kds}\le Ce^{-CK} \int_0^1 {\Bbb
E}_xe^{X_s^K }ds\le Ce^{-CK},$$ where the last inequality is by
\eqref{expbound}. The final step is a recursive argument, formulated
in terms of the filtration $\{{\cal F}_n:=\sigma(\{A^K_t, t\le
n\})\}$:
\begin{eqnarray*}
{\Bbb P}_0(\tau^K>n)={\Bbb P}(\sup_{t\le n}|X^K_t-1|<\varepsilon)\\
={\Bbb P}_0(\sup_{t\le n-1}|X^K_t-1|<\varepsilon,\sup_{n-1\le t\le n}|X^K_t-1|<\varepsilon)\\
\ge{\Bbb P}_0\Big(\sup_{t\le n-1}|X^K_t-1|<\varepsilon,\sup_{n-1\le t\le n}|X^K_t-1|<\varepsilon, |X_{n-1}^K-1|<\eta\Big)\\
={\Bbb E}_0\Big( {\Bbb P}\big((|X_{n-1}^K-1|<\eta,\sup_{n-1\le t\le n}|X^K_t-1|<\varepsilon)|{\cal F}_{n-1}\big); \sup_{t\le n-1}|X^K_t-1|<\varepsilon)\Big)\\
\ge \inf_{x\in (1-\eta,1+\eta)} {\Bbb P}_x(\sup_{t\le 1}|X^K_t-1|<\varepsilon){\Bbb P}_0(\tau>n-1)\\
\ge \Big(\inf_{x\in (1-\eta,1+\eta)} {\Bbb P}_x(\sup_{t\le 1}|X^K_t-1|<\varepsilon)\Big)^n.
\end{eqnarray*}
So
$${\Bbb E}_0\tau^K >\sum_n {\Bbb P}_0(\tau^K>n)>\sum_n \Big(\inf_{x\in (1-\eta,1+\eta)} {\Bbb P}_x(\sup_{t\le 1}|X^K_t-1|<\varepsilon)\Big)^n$$
$$=\frac{1}{1-\inf_{x\in (1-\eta,1+\eta)} {\Bbb P}_x(\sup_{t\le 1}|X^K_t-1|<\varepsilon)}$$
$$=\frac{1}{\sup_{x\in (1-\eta,1+\eta)} {\Bbb P}_x(\sup_{t\le 1}|X^K_t-1|>\varepsilon)}>Ce^{cK}.$$

\end{proof}

\section{A simple example with far-reaching conclusions}
The main drawback of results like Theorem \ref{exptime}, and generally large deviations principles, is the implicitness of the constants involved. In the ``bare bones'' binary splitting case \cite{ksvhj},  the situation is more transparent. Indeed, 
let each individual live for one season (=generation), begetting two children in the next with probability $K/(K+z)$, and none otherwise, if the population size is $z$. Then,  the constants $C$ and $c$ in Theorem \ref{exptime} can be chosen as 1 and 
\begin{equation}
c=\frac{d(1-d)^2}{8(1+d))} \label{expconstant},
\end{equation}
for any $0<d<1$ and any starting population size  $z\geq dK$.
(This corrects a misprint in the statement, without proof, of this result in  \cite{ksvhj}.) 

If $d=0.5$, say, then $c = 0.01$. Thus even a population in a habitat  with a biologically small carrying capacity of say one thousand individuals, will probably persist for many generations, $e^{11.25} \approx 20.000$, if it does not die out during the first few rounds. Only with very short generation times, like one hour,  for certain cells or bacteria, this will be of the magnitude a couple of years. Simulations further indicate that this approximation is excellent: biologically small carrying capacities may well be mathematically large. 

By analogy, endangered more longlived species, say of a size of a couple of thousand individuals,  seem not threatened by demographic stochastic fluctuations during time periods of human scales. This certainly corresponds to established beliefs, but further investigation may not be unwarranted.  Anyhow, the conclusion at this stage is that real dangers rather lie in trends or in varying environments, not included in this type of models, describing unvarying carrying capacities.  We intend to study randomly varying carrying capacities in a sequel paper. In the present context more pertinent questions would seem to concern  population properties during the long lingering around high carrying capacities.

But first the proof (due to V. A. Vatutin) of \eqref{expconstant}.

Define $\tau^K$ as the hitting time of $dK$, in slight disagreement with earlier notation.

\begin{theorem}
For any $K>0, n\in N$,  and $z\ge dK, 0<d<1 $,
$${\Bbb P}_z(Z_1> dK)\ge 1-e^{-cK}.$$ 
$${\Bbb P}_z(\tau^K>n)\ge (1-e^{-cK})^n,$$ and
$${\Bbb E}_z[\tau^K]\ge e^{cK}$$
with $c$ as defined in \eqref{expconstant}.
\end{theorem}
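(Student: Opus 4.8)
The plan is to reduce the three assertions to the single estimate $\mathbb{P}_z(Z_1 > dK) \ge 1 - e^{-cK}$ for all starting sizes $z \ge dK$, since the other two then follow by the same recursive (Markov) argument already used in the proof of Theorem~\ref{exptime}: conditioning on $\mathcal{F}_{n-1}$ and using that on $\{\tau^K > n-1\}$ the population size at time $n-1$ is at least $dK$, one gets $\mathbb{P}_z(\tau^K > n) \ge (1-e^{-cK})\,\mathbb{P}_z(\tau^K > n-1) \ge (1-e^{-cK})^n$, whence $\mathbb{E}_z[\tau^K] = \sum_{n\ge 0}\mathbb{P}_z(\tau^K > n) \ge \sum_{n\ge 0}(1-e^{-cK})^n = e^{cK}$. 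So the whole problem is the one-step bound.

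For the one-step bound I would work directly with the Galton--Watson structure of this special model. Starting from $z \ge dK$ individuals in generation $0$, the number $Z_1$ of individuals in generation $1$ is $2S$, where $S = \sum_{i=1}^z \xi_i$ and, crucially, the $\xi_i$ are \emph{not} i.i.d.\ but are conditionally Bernoulli with success probability $p = K/(K+\zeta)$ depending on the current size $\zeta$; however, since the splitting probability $K/(K+\zeta)$ is decreasing in $\zeta$ and we only care about the event $\{Z_1 > dK\}$, the cleanest route is to bound $Z_1$ stochastically from below by the offspring of a genuine i.i.d.\ Galton--Watson step with the \emph{frozen} parameter $p_* = K/(K + z_{\max})$ evaluated at the largest size the population could plausibly have — but in fact, because within one generation the size before reproduction is exactly the fixed number $z$, each parent independently splits with probability exactly $K/(K+z)$, so $Z_1 = 2\,\mathrm{Bin}(z, K/(K+z))$ exactly. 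Thus $\mathbb{P}_z(Z_1 \le dK) = \mathbb{P}(\mathrm{Bin}(z, K/(K+z)) \le dK/2)$, and since $z \ge dK$ the mean $zK/(K+z)$ exceeds a fixed multiple of $dK$; a Chernoff bound on the lower tail of this binomial then gives $\mathbb{P}_z(Z_1 \le dK) \le e^{-cK}$.

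The remaining work is to pin down the constant $c$ so that it matches \eqref{expconstant}, namely $c = d(1-d)^2/(8(1+d))$. Here I would use the standard Chernoff estimate $\mathbb{P}(\mathrm{Bin}(N,p) \le (1-\delta)Np) \le \exp(-\delta^2 Np/2)$. With $N = z$, $p = K/(K+z)$, so $Np = zK/(K+z)$; the target threshold is $dK/2 = (1-\delta)Np$, which determines $\delta$, and then the worst case over $z \ge dK$ has to be located — intuitively the bound is tightest (smallest exponent) when $z$ is as small as possible, i.e.\ $z = dK$, giving $Np = dK\cdot K/(K+dK) = dK/(1+d)$ and $(1-\delta) = (dK/2)/(dK/(1+d)) = (1+d)/2$, so $\delta = (1-d)/2$ and the exponent becomes $\delta^2 Np/2 = \frac{(1-d)^2}{4}\cdot\frac{dK}{1+d}\cdot\frac12 = \frac{d(1-d)^2}{8(1+d)}K$, exactly $cK$. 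I would then verify monotonicity in $z$ to confirm $z=dK$ is genuinely the worst case (or simply restrict attention to that case since any larger $z$ only helps).

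The main obstacle, such as it is, is the bookkeeping around the claim "$z = dK$ is the worst case": one must check that the Chernoff exponent $\delta^2 Np/2$, as a function of $z$ with threshold fixed at $dK/2$, is increasing in $z$ over $\{z \ge dK\}$, which requires differentiating $z\mapsto \bigl(1 - \tfrac{dK/2}{zK/(K+z)}\bigr)^2 \cdot \tfrac{zK/(K+z)}{2}$ and checking the sign; this is elementary but is the only place a genuine (if small) calculation is needed, and it is also presumably where the original misprint in \cite{ksvhj} crept in. Everything else — the exactness of the one-step binomial representation, the Chernoff bound itself, and the recursive passage to $\mathbb{P}_z(\tau^K > n)$ and $\mathbb{E}_z[\tau^K]$ — is routine.
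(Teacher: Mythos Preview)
Your proposal is correct and follows essentially the same route as the paper: the Chernoff bound you invoke, $\mathbb{P}(\mathrm{Bin}(N,p)\le (1-\delta)Np)\le e^{-\delta^2 Np/2}$, is exactly Janson's inequality $\mathbb{P}(X\le kp-a)\le e^{-a^2/(2kp)}$ under the substitution $a=\delta kp$, and your identification of $z=dK$ as the worst case, the induction via the Markov property, and the summation for $\mathbb{E}_z[\tau^K]$ all match the paper's argument. The only cosmetic difference is that the paper handles the ``worst case'' monotonicity more cleanly by writing the exponent as $K(f(x)-d)^2/(4f(x))$ with $f(x)=2x/(1+x)$ and observing that $u\mapsto (u-d)^2/(4u)$ is increasing for $u>d$, which spares you the direct differentiation in $z$ that you flag as the remaining chore.
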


\begin{proof}
 Janson's inequality for binomial distributions \cite{Janson}  tells that a binomial random variable
$X$, with parameters $k,p$, satisfies
$${\Bbb P} (X\le kp-a)\le e^{-a^2/(2kp)},$$  for any $a>0$.
Of course, the bound  remains correct when the inequality $X\le kp-a$ is multiplied by two. But with  $p(z) = K/(K+z)$ and  $Z_0=z, \ Z_1\sim 2\mbox{Bin}(z, p(z))$, and so
$${\Bbb P}_z(Z_1\le dK)={\Bbb P}_z(Z_1\le 2zp(z)-2a)$$
$$\le \exp\Big(-\frac{(2zp(z)-dK)^2}{8zp(z)}\Big)=\exp\Big(-K\frac{(f(x)-d)^2}{4f(x)}\Big),$$
where $x=z/K$, $f(x)=2xp(z)$, and the constant $a$ of Janson's theorem is $(2zp(z)-dK)/2$. This is ok, since for $1>d, x>d$,  $f(x)>f(d)>d$ , and therefore
$a>0$. Further, $(u-d)^2/(4u)$ is an  increasing function of $u>d$ so that for
$x>d$
$$\frac{(f(x)-d)^2}{4f(x)}>\frac{(f(d)-d)^2}{4f(d)}=c.$$

Hence, for any $x>d$, i.e. $z>dK$,
$${\Bbb P}_z(Z_1\le dK)\le e^{-cK},$$ as claimed.

For the second assertion,
$${\Bbb P}_z(\tau^K>1) = {\Bbb P}_z(Z_1> dK) \ge 1-e^{-cK}.$$

 We use induction to show that for any $z\ge dK$ and natural number $n$
$${\Bbb P}_z(\tau^K>n)>(1-e^{-cK})^n.$$ 
By the Markov property,
$${\Bbb P}_z(\tau^K>n+1)={\Bbb P}_z(Z_1> dK,\ldots,Z_{n+1}> dK)$$
$$=\sum_{k> dK}{\Bbb P}_z(Z_1=k,Z_2> dK,\ldots,Z_{n+1}> dK)$$
$$=\sum_{k> dK}{\Bbb P}(Z_2> dK,\ldots,Z_{n+1}> dK|Z_1=k) {\Bbb P}_z(Z_1=k)$$
$$=\sum_{k> dK}{\Bbb P}_k(\tau^K>n) {\Bbb P}_z(Z_1=k).$$
 Induction yields that this is
$$\ge (1-e^{-cK})^n {\Bbb P}_z(Z_1> dK)\ge (1-e^{-cK})^{n+1}, $$
as required. The last assertion follows from the relation ${\Bbb E}[\tau^K]=\sum_n
{\Bbb P} (\tau^K>n).$
\end{proof}

There is a corresponding assertion for the waiting time until population size leaves a band around the carrying capacity upwards:

\begin{theorem}
For any  $d>1$ write $c_1=\frac{(d-1)^2}{8(d+1)}$. Then for any $K$
and $z\le dK$,
$${\Bbb P}_z(Z_1< dK)\ge 1-e^{-c_1K}.$$ Moreover,
for any $z\le dK$
$${\Bbb P}_z(\tau^K>n)>(1-e^{-c_1K})^n,$$ and
$${\Bbb E}_z[\tau^K]>e^{c_1K}.$$
\end{theorem}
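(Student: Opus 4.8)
The plan is to imitate the proof of the preceding theorem (the binary‑splitting, downward version due to Vatutin), the one twist being that Janson's inequality is quoted only as a \emph{lower}‑tail estimate, so I would apply it not to $Z_1$ itself but to the complementary binomial. Write $Z_1=2Y$ with $Y\sim\mathrm{Bin}(z,p(z))$, $p(z)=K/(K+z)$, and let $N:=z-Y$ be the number of individuals that die childless, so $N\sim\mathrm{Bin}(z,q(z))$ with $q(z)=z/(K+z)$. Since $Z_1=2(z-N)$, the event $\{Z_1\ge dK\}$ equals $\{N\le z-dK/2\}$. Put $x=z/K$ and $f(x)=2x/(1+x)$, so $zp(z)=Kf(x)/2$ and $zq(z)=z^2/(K+z)=x^2K/(1+x)$. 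The deficit below the mean of $N$ is
\[
a:=zq(z)-(z-dK/2)=\frac{dK}{2}-zp(z)=\frac{K}{2}\bigl(d-f(x)\bigr).
\]
Because $d>1$ and $f$ is increasing, $f(x)\le f(d)=\tfrac{2d}{1+d}<d$ on $0<x\le d$ (this is the one place $d>1$ is used), hence $a>0$ and Janson's inequality applies, giving
\[
{\Bbb P}_z(Z_1\ge dK)\le\exp\!\Bigl(-\frac{a^2}{2zq(z)}\Bigr)=\exp\!\Bigl(-K\,\frac{(d-f(x))^2(1+x)}{8x^2}\Bigr).
\]

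It then remains to check that the coefficient $g(x):=(d-f(x))^2(1+x)/(8x^2)$ is at least $c_1$ for every $0<x\le d$. I would verify this by a monotonicity argument. The substitution $u=f(x)$, i.e. $x=u/(2-u)$, maps $(0,d]$ onto $(0,f(d)]$ and turns $g$ into $h(u)=(d-u)^2(2-u)/(4u^2)$; on $(0,f(d)]$ one has $f(d)<\min\{d,2\}$, so $d-u>0$ and $2-u>0$ there, and
\[
\frac{d}{du}\log h(u)=-\Bigl(\frac{2}{d-u}+\frac{1}{2-u}+\frac{2}{u}\Bigr)<0 .
\]
Thus $g$ is minimised over $(0,d]$ at the right endpoint $x=d$, where $u=f(d)$, and a short computation gives $h(f(d))=\dfrac{(d-1)^2}{8(d+1)}=c_1$. (Equivalently, writing $g(x)=(d+(d-2)x)^2/(8x^2(1+x))$ and differentiating in $x$ shows $g'<0$ on $(0,d)$.) Hence ${\Bbb P}_z(Z_1\ge dK)\le e^{-c_1K}$, i.e. ${\Bbb P}_z(Z_1<dK)\ge1-e^{-c_1K}$, the first assertion.

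The remaining two statements follow exactly as in the preceding theorem, with $\tau^K$ now the first season for which $Z_n\ge dK$, so that $\{\tau^K>n\}=\{Z_1<dK,\dots,Z_n<dK\}$. Induction on $n$ together with the Markov property gives
\[
{\Bbb P}_z(\tau^K>n+1)=\sum_{k<dK}{\Bbb P}_k(\tau^K>n)\,{\Bbb P}_z(Z_1=k)\ge(1-e^{-c_1K})^n\,{\Bbb P}_z(Z_1<dK)\ge(1-e^{-c_1K})^{n+1},
\]
the induction hypothesis being available for each $k<dK$ precisely because the first assertion holds for every starting size $\le dK$, and the base case $n=1$ being the first assertion. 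Summing, ${\Bbb E}_z[\tau^K]=\sum_{n\ge0}{\Bbb P}_z(\tau^K>n)\ge\sum_{n\ge0}(1-e^{-c_1K})^n=e^{c_1K}$.

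The only genuinely non‑routine point, and the one I expect to absorb most of the effort, is the algebra establishing that the exponent coefficient $g(x)$ attains its minimum over the admissible densities at the boundary $x=d$ and that the minimum value is \emph{exactly} $c_1$; a close second is the observation that the quoted Janson bound must be invoked for the number of childless individuals, since it controls only the lower tail of a binomial.
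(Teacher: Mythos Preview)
Your proposal is correct and follows essentially the same route as the paper: apply Janson's lower-tail bound to the complementary binomial $N=z-Y\sim\mathrm{Bin}(z,1-p(z))$, obtain the exponent $K\,(d-f(x))^2(1+x)/(8x^2)$, minimise it over $x\in(0,d]$, and then run the Markov induction for $\tau^K$. The only cosmetic difference is in the minimisation step: the paper writes the exponent as $(d-f(x))^2/\bigl(4(2x-f(x))\bigr)$ and bounds numerator and denominator \emph{separately} (minimise the increasing numerator, maximise the increasing denominator $2x-f(x)=2x^2/(1+x)$, both at $x=d$), which is quicker than your substitution $u=f(x)$ and log-derivative computation; both yield the same endpoint value $c_1$.
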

\begin{proof}
 Since $k-\mbox{Bin}(k,p)\stackrel{d}{=} \mbox{Bin}(k,1-p)$,
$${\Bbb P}(\mbox{Bin}(k,p)\ge kp+a)={\Bbb P}(k-Bin(k,p)\le k(1-p)-a)$$
$$={\Bbb P}(\mbox{Bin}(k,1-p)\le k(1-p)-a)\le e^{-a^2/(2k(1-p))}.$$
Again, the bound  remains after multiplication of the inequality
 by 2. Thus, with $Z_0=z$, and $2a=dK-2zp(z)$ and $f$ as above,

$${\Bbb P}_z(Z_1\ge dK)={\Bbb P}(2\mbox{Bin}(z,p(z))\ge dK)$$
$$={\Bbb P}(2\mbox{Bin}(z,p(z))\ge 2zp(z)+ dK-2zp(z)).$$
Now $a>0$ is equivalent to $d-f(x)>0$. But since $f$ increases and
 $x<d$, $f(x)<f(d)$, and so $f(d)<d$
as $d>1$ now. Hence,

$${\Bbb P}_z(Z_1\ge dK)\le \exp\Big(-\frac{(dK-2zp(z))^2}{8z(1-p(z))}\Big)=\exp\Big(-K\frac{(d-f(x))^2}{4(2x -f(x))}\Big).$$

Now, the smallest value of $d-f(x)$ provided $x<d$ is
$d-f(d)$ by the monotonicity of $f$. The function
$2x-f(x)=\frac{2x^2}{1+x}$ is positive and   increasing. Therefore its largest value in $(0,d)$ is
$2d-f(d)$. Hence
$$\frac{(d-f(x))^2}{4(2x -f(x))}\ge \frac{(d-f(d))^2}{4(2d -f(d))}=\frac{(d-1)^2}{8(d+1)}=c_1,$$
and
$${\Bbb P}_z(Z_1\ge dK)\le e^{-c_1K}.$$
The rest follows as in the preceding theorem.

\end{proof}

In our context it is leaving downwards that is crucial. It may however be worth noting that if $d=1\pm \epsilon$, with $0<\epsilon <1$, then $c<c_1$ works for both cases. 

\section{Stabilisation of the population composition}

Thus, we turn to the  long period  of lingering around the carrying capacity.  Will the population composition have the time to stabilise, and then how can the pseudo-stable age-distribution and other aspects of the composition be described? The age distribution was recently investigated in \cite{trudy}, resulting in the following two main theorems, on tightness, and convergence, respectively. The convergence of the mass distribution (for decent $W$) follows trivially, cf. \cite{odwyer}.

\begin{theorem}\label{tight}
Assume that all demographic parameters are uniformly bounded. Suppose also that the support of $\bar A^K_0$
and its total mass are bounded, $\sup_K\inf\{t>0: A^K_0((t,+\infty))=0\}<\infty$ and $D=\sup_K
|\bar A^K_0|<\infty$.  Then, the family  $ \{\bar A^K_t, t\ge 0\}_{K}$ is tight in ${\Bbb D}({\Bbb R}^+, {\cal M}({\Bbb R}^+))$.
\end{theorem}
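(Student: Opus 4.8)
\emph{Proof strategy.} The feature that makes tightness essentially automatic here is that age propagates at unit speed while newborns always enter at age $0$. Write $S:=\sup_K\inf\{t>0:A^K_0((t,+\infty))=0\}<\infty$ for the uniform bound on the initial supports assumed in the statement; then $A^K_t((S+t,+\infty))=0$, hence $\bar A^K_t((S+t,+\infty))=0$, almost surely for every $t\ge0$ and every $K$. Thus no mass leaks to infinity in the age variable, and on each fixed horizon $[0,T]$ the processes $\bar A^K$ take values in the Polish space ${\cal M}([0,S+T])$ of finite measures on a compact interval. I would then invoke the standard tightness criterion for measure-valued processes in the Skorokhod space (Roelly's criterion, as used in \cite{tran}): tightness of $\{\bar A^K\}_K$ in ${\Bbb D}({\Bbb R}^+,{\cal M}({\Bbb R}^+))$ follows from (i) a compact containment estimate, and (ii) tightness in ${\Bbb D}({\Bbb R}^+,{\Bbb R})$ of the real-valued projections $\{(f,\bar A^K_\cdot)\}_K$ for $f$ ranging over $\{1\}$ and a countable point-separating family of $C^1_b$ functions.

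\textbf{Step 1: a uniform running-maximum bound and compact containment.} The quoted estimate \eqref{meansize} bounds only the one-time marginal, so I would first upgrade it to a bound on the path supremum, uniform in $K$. From \eqref{popsizeq} and boundedness of the criticality function, $\sup_{t\le T}Z^K_t\le Z^K_0+C\int_0^T Z^K_s\,ds+\sup_{t\le T}|M^{1,K}_t|$; taking expectations, estimating ${\Bbb E}[\sup_{t\le T}|M^{1,K}_t|]$ by Doob's $L^2$ inequality and the bracket bound \eqref{boundQV} of Corollary 2 (valid precisely because $b_z,m_z,h_z,v_z$ are uniformly bounded), and then invoking \eqref{meansize}, one gets ${\Bbb E}[\sup_{t\le T}Z^K_t]\le C_T({\Bbb E}[Z^K_0]+1)$; dividing by $K$ and using ${\Bbb E}[X^K_0]={\Bbb E}[|\bar A^K_0|]\le D$,
\[
\sup_K\ {\Bbb E}\big[\sup\nolimits_{t\le T}X^K_t\big]\ \le\ C_T(D+1),\qquad K\ge1 .
\]
Compact containment follows at once: the set $\Gamma_{T,M}:=\{\mu\in{\cal M}({\Bbb R}^+):\mu((S+T,+\infty))=0,\ \mu({\Bbb R}^+)\le M\}$ is weakly compact (uniformly compact support plus bounded total mass, and it is weakly closed), $\{\sup_{t\le T}X^K_t\le M\}\subseteq\{\bar A^K_t\in\Gamma_{T,M}\text{ for all }t\le T\}$, and ${\Bbb P}(\sup_{t\le T}X^K_t>M)\le C_T(D+1)/M\to0$ as $M\to\infty$, uniformly in $K$.

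\textbf{Step 2: tightness of the projections.} Rescaling Dynkin's representation \eqref{Dynkin1} by $K$ gives
\[
(f,\bar A^K_t)=(f,\bar A^K_0)+\int_0^t\big(L_{A^K_s}f,\bar A^K_s\big)\,ds+\tfrac1K M^{f,K}_t .
\]
By condition $(H1)$ (Corollary 1) the drift integrand is $\le C_f(K^{-1}+X^K_s)\le C_f(1+X^K_s)$ in modulus, and by \eqref{sharp} the predictable bracket of $\tfrac1K M^{f,K}$ is $\le\tfrac{C_f}{K}\int_0^t X^K_s\,ds$, with $C_f$ depending only on $f$ and the uniform parameter bounds. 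Hence, for stopping times $\sigma_K\le\tau_K\le(\sigma_K+\delta)\wedge T$,
\[
{\Bbb E}\big[\,|(f,\bar A^K_{\tau_K})-(f,\bar A^K_{\sigma_K})|\,\big]\ \le\ C_f\,\delta\,{\Bbb E}\big[1+\sup\nolimits_{s\le T}X^K_s\big]+\Big(\tfrac{C_f}{K}\,\delta\,{\Bbb E}\big[\sup\nolimits_{s\le T}X^K_s\big]\Big)^{1/2},
\]
which, by Step 1, tends to $0$ as $\delta\to0$ uniformly in $K$; together with the marginal bound $|(f,\bar A^K_t)|\le\|f\|_\infty X^K_t$ this is Aldous's tightness condition, so $\{(f,\bar A^K_\cdot)\}_K$ is tight in ${\Bbb D}({\Bbb R}^+,{\Bbb R})$ for every $f\in C^1_b$, in particular for $f\equiv1$. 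Roelly's criterion then yields tightness of $\{\bar A^K\}_K$ on each $[0,T]$, hence on ${\Bbb R}^+$.

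None of the steps is conceptually deep; the only place where the hypotheses are genuinely used beyond the deterministic support bound is the uniform running-maximum estimate of Step 1, so that is the step to treat carefully, \eqref{meansize} being — as stated — not uniform over paths. A secondary, purely bookkeeping, matter is the choice of the countable point-separating $C^1_b$ family and the verification that Roelly's (equivalently Jakubowski's) criterion applies on ${\cal M}([0,S+T])$, which is routine for age-structured measure-valued processes; see \cite{tran}.
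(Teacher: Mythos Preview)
Your proof is correct and follows essentially the same route as the paper: both reduce tightness in ${\Bbb D}({\Bbb R}^+,{\cal M}({\Bbb R}^+))$ to Jakubowski's compact-containment condition J1 plus coordinate tightness J2 (which you invoke in its Roelly form), and both take the separating family to consist of the linear functionals $\mu\mapsto(f,\mu)$ so that J2 becomes tightness of the real-valued processes $(f,\bar A^K_\cdot)$ obtained from the Dynkin representation \eqref{Dynkin1}. The paper only states this framework and defers the actual verification of J1--J2 to \cite{trudy}; your Step~1 (running-supremum bound on $X^K$ plus the deterministic age-support bound $S+t$) and Step~2 (Aldous's stopping-time criterion applied to the drift-plus-martingale decomposition) supply precisely such a verification.
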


The proof hinges upon Jakubowski's criteria for weak convergence of random measures in spaces again with weak topology,  Theorem 4.6 of
  \cite{Jakub}: 
  A sequence $\mu^K$ of ${\Bbb D}({\Bbb R}^+, {\cal M}({\Bbb R}^+))$-valued random
elements  is tight if and only if the following two conditions are
satisfied.

 J1. (Compact Containment) For each $T > 0$ and $\eta>0$ there exists a compact set
$\mathcal{C}_{T,\eta}\subset {\cal M}({\Bbb R}^+)$ such that
$$\liminf_{K\to\infty}{\Bbb P}(\mu^K_t\in \mathcal{C}_{T,\eta} \;\forall t \in [0, T])>1-\eta.$$

J2. (Separable Coordinate Tightness) There exists a family ${\Bbb F}$ of
real continuous functions $F$ on ${\cal M}({\Bbb R}^+)$ which separates
points in ${\cal M}({\Bbb R}^+)$, is closed under addition, and such that
for every $F\in {\Bbb F}$, the sequence $\{F(\mu^K_t), t\geq0\}_K$
is tight in ${\Bbb D} ({\Bbb R}^+,{\Bbb R})$.

We refer to \cite{trudy} for the technical verification of them in our circumstances, and also for the proof of the convergence theorem that follows, with the help of a suitable smoothness concept.
\begin{definition}
A population process will be said to be {\em demographically smoothly
  density dependent}, or for short just demographically smooth,  if:
\begin{itemize}
\item[C0] The model parameters, $b,h,m$, are uniformly bounded.
\item[C1] They are also normed uniformly Lipschitz in the following
  sense: there is a $C>0$ such that for all $u$ and $K$, $\rho(\mu,\nu)$
  denoting the Levy-Prokhorov distance between measures $\mu$ and $\nu$,
\begin{itemize}
\item $|b^K_A(u)-b^K_B(u)|\le C\rho(A/K,B/K)$,
\item $|h^K_A(u)-h^K_B(u)|\le C\rho(A/K,B/K)$,
\item $|m^K_A(u)-m^K_B(u)|\le C\rho(A/K,B/K)$.
\end{itemize}
\item[C2] $\bar A^K_0 \Rightarrow \bar A_0^\infty$, and $\sup_K|\bar A_0^K|<\infty$. We say that the process {\em stabilises initially}.
\end{itemize}

\end{definition}

\begin{theorem}\label{agedistribution} In a demographically smoothly density dependent
  population process, the processes
$\bar A^K$ converge weakly in the Skorokhod space ${\Bbb D} ({\Bbb R}^+, {\cal
M}({\Bbb R}^+))$. The limiting measure-valued process,  $\bar A^\infty$, displays no
randomness, and for any test function $f$, $\bar A^\infty$ satisfies the integral equation
\begin{equation}
(f,A_t)=(f,\bar A_0^\infty)+\int_0^t(L^\infty_{A_s}f,A_s)ds, \label{limit}
\end{equation}
  where $A_t$ is short for  $\bar A_t^\infty$ and 
  $$
 L^\infty_Af=f'-h^\infty_Af + f(0)(b^\infty_A +h^\infty_Am^\infty_A).$$
\end{theorem}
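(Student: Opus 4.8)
The plan is to prove Theorem~\ref{agedistribution} by combining the tightness already granted by Theorem~\ref{tight} with an identification of all subsequential limits through the martingale representation of Theorem~\ref{DynRep}. First I would observe that condition C0 makes the hypotheses of Theorem~\ref{tight} applicable (uniform boundedness of parameters; C2 supplies the bounded support and bounded mass of $\bar A_0^K$, at least after the routine reduction to initial data concentrated on a bounded age interval), so the family $\{\bar A^K\}_K$ is tight in ${\Bbb D}({\Bbb R}^+,{\cal M}({\Bbb R}^+))$. Hence along any subsequence there is a weak limit, and by Skorokhod's representation theorem I may assume $\bar A^K\to \bar A^\infty$ almost surely in the Skorokhod topology on a common probability space. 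The goal is then to show the limit is deterministic and satisfies \eqref{limit}; uniqueness of the solution to \eqref{limit} will upgrade subsequential convergence to full convergence.

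Next I would pass to the limit in the Dynkin representation. Writing $\bar A^K_t = A^K_t/K$, equation \eqref{Dynkin1} applied to the rescaled process and a fixed $C^1$ test function $f$ gives
\begin{equation}
(f,\bar A^K_t)=(f,\bar A^K_0)+\int_0^t (L^K_{\bar A^K_s}f,\bar A^K_s)\,ds+\frac1K M_t^{f,K},\nonumber
\end{equation}
where $L^K_A f = f' - h^K_A f + f(0)(b^K_A + h^K_A m^K_A)$. The martingale term vanishes in the limit: from \eqref{sharp}, under C0 the predictable quadratic variation of $M^{f,K}$ is bounded by $C\int_0^t (1,A^K_s)\,ds = CK\int_0^t (1,\bar A^K_s)\,ds$, and since $(1,\bar A^K_s)$ stays bounded in expectation uniformly in $K$ by \eqref{meansize} (here using $\sup_K|\bar A^K_0|<\infty$ from C2), Doob's inequality gives ${\Bbb E}\sup_{t\le T}|K^{-1}M_t^{f,K}|^2 = O(1/K)\to 0$. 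For the drift term I would use the normed Lipschitz hypothesis C1: it guarantees that $(A,u)\mapsto L^K_A f(u)$ depends on $A$ only through $A/K$ in an equicontinuous way with respect to the Lévy--Prokhorov metric $\rho$, and together with C0 (uniform boundedness) plus C2 this makes the map $\mu\mapsto \int_0^t (L_{\mu_s}f,\mu_s)\,ds$ continuous on the relevant compact sets of ${\Bbb D}({\Bbb R}^+,{\cal M}({\Bbb R}^+))$ (one also needs that $\bar A^\infty$ has no fixed atoms in time, which follows because limits of the continuous-time age-measure process inherit no deterministic jumps — all jumps are $O(1/K)$). Passing to the limit along the subsequence then yields that $\bar A^\infty$ a.s.\ satisfies \eqref{limit} with $L^\infty$ as displayed.

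It remains to show the limit carries no randomness and is unique. Both follow from a Gronwall-type uniqueness argument for \eqref{limit}: if $A_t$ and $\hat A_t$ are two solutions with the same (deterministic) initial datum $\bar A_0^\infty$, I would test the difference against a rich enough separating class of $f$ (e.g.\ bounded $C^1$ functions), estimate $|(f,A_t)-(f,\hat A_t)|$ using C0 to control $h^\infty$, $b^\infty$, $m^\infty$ and C1 to bound $|L^\infty_{A_s}f - L^\infty_{\hat A_s}f|$ by $C\rho(A_s,\hat A_s)$, and close the loop via Gronwall's lemma (in the form of \cite{fkbook}, as already used in the proof of Theorem~4) to get $\rho(A_t,\hat A_t)\equiv 0$. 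Since every subsequential limit solves \eqref{limit} with the deterministic initial value $\bar A_0^\infty$ from C2, they all coincide with this unique deterministic solution; hence $\bar A^K\Rightarrow \bar A^\infty$ along the full sequence and $\bar A^\infty$ is deterministic. The main obstacle I anticipate is the continuity of the drift functional on path space: one must be careful that weak-$*$ convergence of measures is enough to pass $(L^\infty_{\bar A^\infty_s}f,\bar A^\infty_s)$ through the time integral, which requires the normed-Lipschitz bound C1 to convert the a priori merely weak convergence into the uniform-in-$s$ control needed, and requires ruling out time-atoms in the limit; this is exactly the technical heart deferred to \cite{trudy}.
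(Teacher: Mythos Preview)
The paper does not actually prove this theorem in the text: immediately before the statement it says ``We refer to \cite{trudy} for the technical verification of them in our circumstances, and also for the proof of the convergence theorem that follows.'' So there is no in-paper proof to compare against line by line. That said, your outline is precisely the standard scheme for results of this type and is, in substance, what the cited paper \cite{trudy} carries out: tightness from Theorem~\ref{tight}, vanishing of the rescaled martingale via the quadratic-variation bound \eqref{sharp} and Doob's inequality, passage to the limit in the drift using the normed Lipschitz condition C1 together with the uniform bounds C0, and finally uniqueness for the limiting integral equation \eqref{limit} by a Gronwall argument in a metric on measures. Your identification of the ``technical heart'' (continuity of the drift functional on path space and absence of fixed-time atoms in the limit) is accurate and is exactly the part the present paper delegates to \cite{trudy}.

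Two small caveats worth flagging. First, Theorem~\ref{tight} as stated assumes uniformly bounded support of $\bar A^K_0$, which C2 alone does not give; your parenthetical ``routine reduction'' is the right instinct, but in \cite{trudy} this is handled explicitly rather than assumed away. Second, in the uniqueness step you want to close Gronwall in a metric that both controls $(f,A_t-\hat A_t)$ for enough $f$ and is itself controlled by such pairings; the L\'evy--Prokhorov distance works, but the estimate $|(L^\infty_{A_s}f-L^\infty_{\hat A_s}f,\,\cdot\,)|\le C\rho(A_s,\hat A_s)$ must be paired with a bound on $|(L^\infty_{\hat A_s}f,\,A_s-\hat A_s)|$, and turning the latter into a $\rho$-estimate uniformly over a separating class of $f$ requires a little care (equicontinuity of the test class). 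None of this is a genuine gap in your strategy; it is the bookkeeping that \cite{trudy} supplies.
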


\begin{remark} Equation \eqref{limit} is the weak form of the classical
  McKendrick-von Foerster equation for the   density of $A_t$, $a(t,u)$
$$(\frac{\partial  }{\partial t}+\frac{\partial }{\partial
u})a(t,u)=-a(t,u)h_{A_t}(u),\;\;a(t,0)=\int_0^\infty
m_{A_t}(u)h_{A_t}(u)a(t,u)du.$$
It can be obtained by integration by parts and the adjoint operator $L^*$,
$$(f,A_t)=(f,A_0)+\int_0^t(f,L^*_{A_s}A_s)ds.$$
Of course, smoothness of the density must also  be proved. The derivation of the equation  in the present context further underpins prevailing deterministic theory. REF?
\end{remark}

Once the unique existence of a time limiting age distribution with a
density has been established, its form follows in the usual manner from
the transport equation above, letting $t\to\infty$. The
derivative of $a$ with respect to time vanishes in the limit, and
$h_t(u)\to$ some $h(u)$, to be inserted in the limiting equation.

It is important to note, though, that the latter has a trivial null
solution, if the population starts from a bounded number of ancestors,
a mutant or a limited number of invaders, so that $\lim_{K\to\infty}
|\bar{A}_0^K| = 0$. As we have seen, such populations either grow to reach a band
around the carrying capacity or die out before that. It is an interesting task to describe the asymptotic age distribtion at time $T^K_d$, as $K\to \infty$, and then the evolution
of the process in a suitable evolutionary time scale that starts when
the population enters a band around the carrying capacity, provided it
so does. 

\section{The time of decay}

The last stage of a population's existence is that when it left a band around the carrying capacity, never to return. Its duration is the time to extinction $T$, from a level $x=aK$, given that the maximum of the process $\bar{Z}$ will never exceed $y=bK, 0<a<b<1$, and $b$ is suitably chosen so as to avoid excessive random overshooting. Since the process is supercritical below the carrying capacity, conditioning upon a maximum value being less than $K$ implies extinction of the various concerned branching processes with frozen parameters,  all supercritical. Since supercritical general branching processe, conditioned to die out, are subcritical \cite{lageras}, one should expect a behaviour in line with the path to extinction of large subcritical processes. For those the survival time of a $K$-sized population, $K\to\infty$,  is of the form, cf. \cite{pnas}, 
\[T = (\log K - c +\eta + o(1))/|\alpha|,\] 
where  $c>0$ is a constant, $\eta$ a Gumbel distributed random variable, and $\alpha$ the (negative) Malthusian parameter of the subcritical process.

In the present case, there is no well defined Malthusian parameter, since rates vary with population size. Furthermore, extinction is only guaranteed if extinction probabilities of the frozen processes involved stay away from zero \cite{haccou}, \cite{pjstab}, as $K$ grows. Below, $q_z = q_z(K)$ denotes the extinction probability of the process with parameters frozen at population size $z$, started from one newly born ancestor, and $m_z = m_z(K)<1$ is the expected number of offspring per individual for $K$ given in the same population, conditioned to die out. 

Before formulating the extinction time theorem, we give a lemma, of some independent interest, about subcritical Galton-Watson processes, which are regular in the sense that there is an $r>1$ such the process reproduction generating function $f$ satisfies $f(r)=r$. The reader may note that this always is the case for a subcritical process which is at bottom a supercritical one, but conditioned to die out. Indeed, if $q<1$ denotes the extinction probability of the supercritical process, then the conditioned generating function $h$ will satisfy $h(s) = f(sq)/q$, so that $h(1/q) = f(1)/q = 1/q$. For a related result cf. also Lemma 3.2 of \cite{dembo} (telling that the moments of the total progeny of a subcritical Galton-Watson process in random environments are finite together with the moments of its reproduction distribution).

\begin{lemma} Consider a subcritical Galton-Watson process $\{\zeta_n, \zeta_0=1\}$ with the generating function $f$, $f(r) = r$ for an $r>1$. Then,  the probability generating function $g$ of the total progeny $\eta=\sum_{n=0}^\infty \zeta_n$, $g(s)={\Bbb E}[s^\eta]]$, will converge for some $s, 1<s<r$.
\end{lemma}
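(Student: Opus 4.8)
The plan is to work directly with the functional equation satisfied by the generating function $g$ of the total progeny. Recall that for a Galton-Watson process $\{\zeta_n\}$ with $\zeta_0=1$ and reproduction generating function $f$, the total progeny $\eta=\sum_{n\ge 0}\zeta_n$ satisfies the classical Dwass/Otter identity, i.e.\ $g$ is the smallest nonnegative solution of
\be
g(s)=s\,f(g(s)). \nonumber
\ee
This is seen by conditioning on the first generation size $\zeta_1=k$: given $\zeta_1=k$, the total progeny is $1+\eta_1+\dots+\eta_k$ with the $\eta_i$ i.i.d.\ copies of $\eta$, so ${\Bbb E}[s^\eta\mid\zeta_1=k]=s\,g(s)^k$, and summing against the offspring distribution gives $g(s)=s\sum_k p_k g(s)^k=s f(g(s))$. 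Since the process is subcritical, $g(1)=1$ (the process dies out a.s.\ and $\eta<\infty$ a.s.), and $g$ is increasing and convex on $[0,1]$ with $g'(1)={\Bbb E}[\eta]=1/(1-f'(1))<\infty$.

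The key point is then to show that the analytic solution branch of $w=sf(w)$ through $(s,w)=(1,1)$ can be continued to $s$ slightly larger than $1$ while keeping $w<r$. First I would rewrite the equation, for $w\in(1,r)$ where $f(w)>0$ (note $f$ is a nondecreasing convex function with $f(1)=1$, $f(r)=r$, so $f(w)\in(1,r)$ on this range and in particular $f(w)>0$), as
\be
s=\frac{w}{f(w)}=:\psi(w). \nonumber
\ee
On the interval $[1,r]$ the function $\psi$ is continuously differentiable, with $\psi(1)=1$ and $\psi(r)=1$. Its derivative at $1$ is $\psi'(1)=\bigl(f(1)-f'(1)\bigr)/f(1)^2=1-f'(1)>0$, by subcriticality. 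Hence $\psi$ is strictly increasing in a right-neighbourhood of $w=1$, so it maps an interval $[1,1+\delta]$ bijectively and increasingly onto $[1,\psi(1+\delta)]$ with $\psi(1+\delta)>1$. Let $s^\ast=\psi(1+\delta)$ and let $w(\cdot)=\psi^{-1}$ be the inverse, a continuous increasing function on $[1,s^\ast]$ with $w(1)=1$, $w(s^\ast)=1+\delta<r$. By the inverse function theorem $w(\cdot)$ is the unique $C^1$ solution of $w=sf(w)$ near $(1,1)$, so it must coincide with $g$ on $[1,s^\ast\wedge(\text{radius of convergence of }g)]$. Since $g$ is finite, increasing and convex wherever its series converges, and $w(\cdot)$ stays bounded (by $1+\delta$) on $[1,s^\ast]$, a standard monotone-continuation argument shows the power series of $g$ in fact converges at every $s\le s^\ast$; one gets, for any $s<s^\ast$, $g(s)=w(s)\le 1+\delta<r$, so the series $\sum {\Bbb P}(\eta=n)s^n$ converges for that $s$. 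Taking any $s$ with $1<s<\min\{s^\ast,r\}$ proves the claim.

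The main obstacle I anticipate is the bookkeeping to guarantee that the \emph{convergence radius} of the power series $g$, and not merely a formal or analytic solution of $w=sf(w)$, extends past $1$: a priori $g$ is only known to converge on $[0,1]$, and one must rule out that $g(s)\uparrow\infty$ (or the coefficients blow up) as $s\uparrow 1^+$ faster than the functional-equation branch suggests. This is handled by the convexity/monotonicity of $g$ on its interval of convergence together with the a priori bound $g(s)\le 1+\delta$ coming from the implicit solution $w(s)$: since a convex increasing function that stays bounded cannot have a finite convergence radius strictly inside the region where the bound holds, the radius must be at least $s^\ast$. A secondary, purely technical, point is to note $f(w)>0$ on $(1,r)$ so that division by $f(w)$ in defining $\psi$ is legitimate; this is immediate since $f(1)=1$ and $f$ is nondecreasing. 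Everything else is a routine application of the inverse function theorem and the uniqueness of the minimal-solution branch.
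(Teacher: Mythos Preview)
Your approach is essentially the paper's: both start from the functional equation $g(s)=sf(g(s))$, rewrite it as $s=\psi(w):=w/f(w)$, note $\psi'(1)=1-f'(1)>0$ by subcriticality, and invert to extend $g$ past $s=1$. The paper goes slightly further by locating the maximum of $\psi$ on $(1,r)$ at the point $v$ where $f(v)=vf'(v)$, giving the explicit endpoint $s=v/f(v)$, but this refinement is not needed for the statement.

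One caveat on your handling of the obstacle you flag: the claim that ``a convex increasing function that stays bounded cannot have a finite convergence radius strictly inside the region where the bound holds'' is not correct as stated --- a power series with nonnegative coefficients may well stay bounded at its radius of convergence --- and the bound $g(s)\le 1+\delta$ you invoke is on $w(s)$, not yet on $g$, so the argument is circular. The clean repair is Pringsheim's theorem: since $g$ has nonnegative coefficients, $s=R$ (its radius) must be a genuine singularity on the positive axis, whereas the implicit function $w$ furnishes an analytic continuation of $g$ through $s=1$; hence $R>1$. Alternatively, argue directly that the truncated-progeny generating functions $g_n(s)=sf(g_{n-1}(s))$, $g_0(s)=s$, satisfy $g_n(s)\le w(s)$ for $1<s<s^\ast$ by induction (using $s<w(s)$ and monotonicity of $f$), and pass to the limit by monotone convergence. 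The paper, incidentally, does not address this point at all.
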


\begin{proof}
As well established, $g$ will satisfy $g(s)=sf\circ g(s), 0\le s\le 1$, cf. \cite{harris}, \cite{haccou} or any branching process monograph. It is bounded and strictly increasing on the unit interval, and so has an inverse $g^{-1}$ on $[g(0), g(1)], g(0)=0, g(1)=r$, since $g(1)>1$ and $f$ has precisely two fixpoints, 1 and $r$. Clearly,
$$g^{-1}(u) = u/f(u).$$
The right hand side is defined on $[0,r]$ and increases strictly from zero to a maximum at the point $v, 1<v<r$ where $f(v) = vf'(v)$. Hence, $g$ is well defined and bounded on the interval $[0,g^{-1}(v]]$, where  the right end point equals the asked for $s=v/f(v)>1$. 
\end{proof}

\begin{theorem}
Beyond earlier assumptions, in particular the monotonicity of frozen processes and the stabilisation of individual life and reproduction laws as $K\to\infty$, assume that for any $0<d<1$, $q_{dK}-q_1 = o(1/K)$ and $\lim_{K\to\infty}m_1(K) =m_1<1$. Let $0<a<b(1-m_1), b<1$.  Consider the process, started at $aK$ and write $\bar{Z}$ for its maximum. Then, as $K\to \infty$, $T|\bar{Z}\le bK = O(\log K)$.

\end{theorem}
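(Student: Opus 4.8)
The plan is to dominate, stochastically, the process conditioned on $\{\bar Z\le bK\}$ from above by a population–size dependent branching process that is \emph{uniformly subcritical}, and then to show that such a process, started from $O(K)$ ancestors, becomes extinct in time $O(\log K)$; the assertion follows since the extinction time is a decreasing functional of the path.

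First I would argue that the conditioning renders the dynamics subcritical. On $\{\bar Z\le bK\}$ the process never reaches the carrying capacity, so each frozen process that dominates $Z$ from above — obtained by applying the monotonicity of frozen processes on the successive time intervals during which $Z$ stays above a given level $\ell<bK$ — is supercritical, and conditioning forces each of these to become extinct. By the fact that a supercritical branching process conditioned on extinction is subcritical, \cite{lageras}, together with the assumed stabilisation of individual reproduction laws and the hypothesis $q_{dK}-q_1=o(1/K)$, each such conditioned frozen process has, while the population is of size $z\le bK$, mean offspring number $m_z(K)\le m$ for some $m<1$ uniform in $z$ and in all large $K$, with $m\to m_1$; moreover each of these conditioned processes is regular in the sense of the preceding Lemma (with $f(r)=r$, $r=1/q_z>1$), so its total progeny has a finite exponential moment. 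Combining this with the monotonicity — using the strong Markov property at the successive level–crossing times and the increasing nature of the event $\{\bar Z\le bK\}$ — I would obtain a uniformly subcritical size–dependent process $Z^*$, started from the same newborn configuration at level $aK$, such that the conditional law of $(Z_t)$ given $\{\bar Z\le bK\}$ is stochastically dominated by that of $(Z^*_t)$ given $\{\bar Z^*\le bK\}$.

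Next I would check that conditioning $Z^*$ on $\{\bar Z^*\le bK\}$ costs essentially nothing, which is exactly where the hypothesis $a<b(1-m_1)$ enters: the total progeny of a regular subcritical process of mean $\le m$ started from $aK$ ancestors has expectation $aK/(1-m)<bK$ for all large $K$, and by the exponential moment from the Lemma it concentrates, so ${\Bbb P}(\bar Z^*\le bK)\to 1$. Hence ${\Bbb E}[T\mid \bar Z\le bK]\le {\Bbb E}[T^*\mid \bar Z^*\le bK]\le {\Bbb E}[T^*]/{\Bbb P}(\bar Z^*\le bK)\le C\,{\Bbb E}[T^*]$, where $T^*$ is the extinction time of $Z^*$. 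Finally, since $Z^*$ is uniformly subcritical it carries a uniform negative Malthusian–type rate $-\alpha<0$; refining the mean estimate \eqref{meansize} by using this subcriticality (the bound \eqref{meansize} alone only controls growth), or equivalently estimating generation by generation so that the $n$-th generation of $Z^*$ has expected size at most $m^n\,aK$, one gets ${\Bbb E}[Z^*_t]\le C\,aK\,e^{-\alpha t}$, hence ${\Bbb P}(T^*>t)={\Bbb P}(Z^*_t\ge 1)\le C\,aK\,e^{-\alpha t}$, which is negligible once $t$ exceeds a fixed multiple of $\log K$. Summing this tail gives ${\Bbb E}[T^*]=O(\log K)$, in line with the subcritical picture recalled above, and with the previous display $T\mid\{\bar Z\le bK\}=O(\log K)$.

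The main obstacle is the construction in the second paragraph: conditioning on the single event $\{\bar Z\le bK\}$ can in principle reverse a stochastic order, so the passage from the unconditioned monotone domination of $Z$ by frozen processes to a domination of the conditioned $Z$ by the conditioned $Z^*$ must be carried out carefully, leaning on the increasing character of $\{\bar Z\le bK\}$, the strong Markov property at the successive level–crossing times, and the overshoot control afforded by $a<b(1-m_1)$. A secondary technical point is to promote the pointwise stabilisation $m_z(K)\to m_1$ to genuine uniformity in $z$ all the way down to small population sizes, so that a single rate $\alpha>0$ governs the entire descent from $aK$ to extinction; this rests on the stabilisation of individual reproduction laws together with the elementary fact that the conditioned-to-die-out version of a very supercritical small-size frozen process still has mean safely below $1$.
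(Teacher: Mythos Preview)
Your strategy diverges from the paper's and, as you yourself flag, the second paragraph is where it breaks. The paper does \emph{not} attempt to build a uniformly subcritical process $Z^*$ dominating the conditioned process. Instead it works with a single frozen comparison process $Z^{(1)}$ (parameters frozen at population size~$1$, the most supercritical level), coupled so that $Z_t\le Z^{(1)}_t$ unconditionally. This gives the set inclusion $\{Z^{(1)}_t=0,\ \bar Z^{(1)}\le y\}\subset\{Z_t=0,\ \bar Z\le y\}$ directly, with no need to argue that conditioning preserves stochastic order. The conditional probability ${\Bbb P}(T\le t\mid\bar Z\le y)$ is then bounded below by the ratio of ${\Bbb P}(T^{(1)}\le t,\ \bar Z^{(1)}\le y;\,Q^{(1)})$ to ${\Bbb P}(\bar Z\le y)$; the numerator is handled by conditioning $Z^{(1)}$ on its own extinction event $Q^{(1)}$ (producing a genuine subcritical process, to which the Lemma and the law of large numbers for total progeny apply, using $a<b(1-m_1)$), and the denominator is bounded above via the other coupling $Z^{(y)}\le Z$ on $\{\bar Z\le y\}$, giving ${\Bbb P}(\bar Z\le y)\le q_y^x$.

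Two concrete problems with your route. First, the passage from ``frozen processes conditioned on extinction are subcritical'' to ``the true process conditioned on $\{\bar Z\le bK\}$ is dominated by a single subcritical $Z^*$'' is not a gap you can close by invoking the strong Markov property at level-crossing times: the event $\{\bar Z\le bK\}$ does not factor through the level-crossing filtration as a product of extinction events for the frozen pieces, so the ``increasing character'' of the event does not deliver the domination you need. Second, you misassign the role of the hypothesis $q_{dK}-q_1=o(1/K)$. It has nothing to do with obtaining uniform subcriticality of conditioned frozen processes; in the paper it is used solely to show that the ratio $(q_1/q_{bK})^{aK}\to 1$, which is exactly what reconciles the $q_1^x$ appearing in the numerator bound with the $q_y^x$ appearing in the denominator bound. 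Without that ratio step the argument cannot conclude, and your outline never arrives at a place where this hypothesis would naturally enter.

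Your use of $a<b(1-m_1)$ is correct and matches the paper's: it guarantees, via the Lemma on total progeny, that the subcritical process obtained by conditioning $Z^{(1)}$ on extinction has ${\Bbb P}(\bar Z^{(1)}>bK\mid Q^{(1)})\to 0$. But this is applied to the fixed frozen process $Z^{(1)}$, not to a hypothetical $Z^*$.
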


\begin{proof}
On the same probability space we define processes $Z^{(k)}$, all sharing starting  size with $Z$, but parameters frozen at the population size $k$. This can be so done that, for all $t$, $Z^{(y)}_t \le Z_t \le Z^{(1)}_t$ on the set where $\bar{Z}\le y$, and also $Z^{(y)}_0 = Z_0 = Z^{(1)}_0 = x$. We write $Q^{(k)}= \{Z^{(k)}\to 0\}$ and $T^{(k)}$ for the corresponding (possibly infinite) extinction times. Bars indicate process maxima throughout,

\[
{\Bbb P}(T \le t, \bar{Z}\le y) = {\Bbb P}( Z_t=0, \bar{Z}\le y) \ge  {\Bbb P}( Z^{(1)}_t=0, \bar{Z}^{(1)}\le y) =\] \[{\Bbb P}( Z^{(1)}_t=0, \bar{Z}^{(1)}\le y; Q^{(1)}) =  {\Bbb P}( Z^{(1)}_t=0; Q^{(1)}) - {\Bbb P}( Z^{(1)}_t=0, \bar{Z}^{(1)}> y; Q^{(1)}) \]\[ \ge {\Bbb P}( Z^{(1)}_t=0; Q^{(1)}) - {\Bbb P}( \bar{Z}^{(1)}> y; Q^{(1)}) .
\]

But ${\Bbb P}( \bar{Z}^{(1)}> y; Q^{(1)}) =  {\Bbb P}(\bar{Z}^{(1)}>y|Q^{(1)}){\Bbb P}(Q^{(1)}).$
Now, if $\{Z^{(1)}_u\}$ were a Galton Watson process, we could conclude from \cite{Lindvall},  $q_1= {\Bbb P}_1(Q^{(1)})$, that
\begin{equation}
 {\Bbb P}(\bar{Z}^{(1)}>y|Q^{(1)}) \le \frac{q_1^{y-x}-q_1^y}{1-q_1^{y}} = O(q_1^{y-x}),
\end{equation} 
tending to zero for $K\to\infty$, if $x=aK, y=bK, 0<a<b<1$ without further ado.

In the general case, many generations can overlap and we can only assert that the maximum of a subcritical  process cannot exceed the total progeny $Y^{1}$ of the ancestors, which in its turn is the sum of the i.i.d. total progenies of each of the ancestors, to be denoted by $\eta_i$, and of course coinciding with the total progenies of the embedded Galton-Watson processes. However, by the lemma above, we know that these have all moments finite. In particular ${\Bbb E}[\eta_i] =1/(1-m_1(K))\to 1/(1-m_1)$,  by the stabilisation of processes, as $K\to\infty$. The law of large numbers (or central limit theorem) applies to show that
\begin{equation}
 {\Bbb P}_x(\bar{Z}^{(1)}>y|Q^{(1)}) \le  {\Bbb P}(\sum_{i=1}^x \eta_i >y) =\epsilon_K \to 0, 
\end{equation}
precisely under the stated conditions. 


Further,
\[{\Bbb P}(\bar{Z}\le y) \le {\Bbb P}(\bar{Z}^{(y)}\le y) = {\Bbb P}(Q^{(y)})(1- {\Bbb P}(\bar{Z}^{(y)}> y|Q^{(y)})) \le q_y^x.
\]
In other words,
\[ {\Bbb P}(T \le t |\bar{Z}\le y) \ge q_1^x( {\Bbb P}( T^{(1)} \le t| Q^{(1)}) -\epsilon_K)/q_y^x,\]
where we keep in mind that $x=aK, y=bK$, the starting point $x$ is subsumed, and extinction probabilities also depend upon $K$. Since
$$\Big(\frac{q_1}{q_{bK}}\Big)^{aK} \to 1,$$
as $K\to \infty$, we can conclude that  asymptotically $T = O(\log K)$. 

\end{proof}

\begin{acknowledgements}
This research has been supported by the Australian Research Council Grant DP120102728. 
\end{acknowledgements}


\begin{thebibliography}{}

\bibitem{asmus} Asmussen, S. and Hering, H., {\em Branching
    Processes}. Birkh\"auser. Boston (1983).
 \bibitem{bb}  Borde-Boussion, A.-M., Stochastic
demographic models: age of a population. {\it Stoch. Proc. Appl.} {\bf
  35}, 279--291 (1990).

\bibitem{meleard} Champagnat N., Ferriere R., and M\'el\'eard, S.,
From individual stochastic processes to macroscopic models in adaptive
evolution. {\it Stoch. Models} {\bf 24}, 2--44  (2008).

\bibitem{chapman} Chapman, S.~J. et al.,  A nonlinear model of age and
  size-structured populations with applications to cell cycles. {\it
    ANZIAM J.}  {\bf 49}, 151--169 (2007).

\bibitem{CL} Chigansky, P. and Liptser, R., Moderate
deviations for a diffusion type process in random environment.
{\em Th. Prob. Appl.} {\bf 54}, 29--50 (2010).

\bibitem{cushing}  Cushing, J.~M., Existence and stability of
  equilibria in age-structured population dynamics. {em
    J. Math. Biology} {\bf 20}, 259--276 (1984).

\bibitem{dawson} Dawson, D.~A., Measure-valued Markov processes. {\em
    \'Ecole d'Et\'e de Probabilit\'es de Saint-Flour XXI. Lecture
    Notes in Math.} {\bf 1541}, Springer, Berlin (1993).

\bibitem{dembo} Dembo A., Peres Y., Zeitouni O.,  Tail estimates for one-dimensional
random walk in random environment. {\em Comm. Math. Phys.} {\bf 181},
667–-683 (1996).

\bibitem{dl}  Dieckmann, U. and Law, R. The dynamical theory of
  coevolution: a derivation from stochastic ecological
  processes. {\it  J. Math. Biology} {\bf 34}, 579--612 (1996). 
\bibitem{odo1} Diekmann, O., Gyllenberg, M.,  Metz, J,
  On the formulation and analysis of general deterministic structured
  population models I. Linear Theory {\it J. Math. Biology} {\bf 36},
  349-388  (1998).

\bibitem{odo2} Diekmann, O., Gyllenberg, M.,  Metz, J. , et al.,
  On the formulation and analysis of general deterministic structured
  population models II. Nonlinear Theory {\it J. Math. Biology} {\bf
    43}, 157--189  (2001). 

\bibitem{odwyer}  O'Dwyer, J. P. et al., An integrative framework for
  stochastic, size-structured community assembly. {\it
    Proc. Nat. Acad. Sci.}  {\bf 106},  6170-6175 (2009).  

\bibitem{eandk} Ethier, S.~N. and Kurtz, T.~G.,  {\it Markov
    Processes.} Wiley, New York (1986).
\bibitem{FW} Freidlin, M.~I. and Wentzell, A.~D.,. {\em Random
    Perturbations of Dynamical Systems.}  Springer-Verlag, New York (1998).
\bibitem{metz} Geritz S. A. H., Kisdi, \'E.,  Mesz\'ena, G., and  Metz
  J. A. J., Evolutionarily singular strategies and the adaptive growth
  and branching of the evolutionary tree. {\it Evol. Ecol.} {\bf 12},
  35--57 (1998).
\bibitem{haccou}  Haccou, P., Jagers, P., and Vatutin, V.~A., {\em
    Branching Processes: Variation, Growth, and Extinction of
    Populations}. Cambridge Univ. Press, Cambridge  (2005).
\bibitem{trudy} Hamza, K., Jagers, P., and Klebaner, F.~K.,  The age
  structure of population-dependent general branching processes in
  environments with a high carrying capacity. {\it Proc. Steklov
    Inst. Math.} {\bf 282}, 90--105 (2013).
\bibitem{harris} Harris, T.~E., {\it The Theory of Branching Processes.} Springer (1963), Dover (1989).
\bibitem{pjbook} Jagers, P., {\it Branching Processes
with Biological Applications.} Wiley. Chichester (1975).
\bibitem{pjstab} Jagers, P., Stabilities and instabilities in
  population dynamics. {\em J. Appl. Prob.} {\bf 29}, 770--780 (1992).

\bibitem{JK} Jagers, P. and Klebaner, F.~C., 
 Population-Size-Dependent and Age-Dependent Branching Processes,
  {\it Stoch. Proc. Appl.} {\bf 87}, 235--254 (2000).

\bibitem{pnas} Jagers, P., Klebaner, F. C., and Sagitov, S., On the
  path to existence. {\em Proc. Nat. Acad. Sci.} 104, 6107-6111 (2007).

\bibitem{lingering} Jagers, P., and Klebaner F.~C.,
  Population-size-dependent, age-structured branching processes linger
  around their carrying capacity.  {\it J. Appl. Prob.}  {48A},
  249-260 (2011).

\bibitem{lageras}  Jagers, P. and Nordvall Lagerås, A., General
  branching processes conditioned on extinction are still branching
  processes {\it Elect. Comm. Probab.} 13:51 (2008).

\bibitem{jn}  Jagers, P. and Nerman, O., The asymptotic composition of
supercritical multi-type branching populations. {\it Springer Lecture
  Notes in Mathematics} {\bf 1626}, 40 - 54 (1996).

\bibitem{Jakub}  Jakubowski, A. On the Skorokhod topology. {\it
    Ann. Inst. H. Poincar\'e} {\bf B22}, 263--285 (1986).

\bibitem{Janson} Janson, S., Large deviation inequalities for sums of
  indicatior variables. Uppsala U. Dep. Mathematics, Tech. Report 34 (1994).

\bibitem{Kallenberg} Kallenberg O., {\it Foundations of Modern
    Probability.}  2nd ed., Springer, Berlin etc. (2002).

\bibitem{kleb84} Klebaner  F.C., Geometric rate of growth in
  population size dependent branching processes. {\it J. Appl. Prob.}
  21, 40--49 (1984).
 \bibitem{fkbook}  Klebaner F.~C., {\it Introduction to Stochastic
     Calculus with Applications}, 2nd.~ed. Imperial College Press,
   London (2005).
\bibitem{ksvhj}  Klebaner, F.~C., Sagitov, S., Vatutin, V.~A., Haccou,
  P., and Jagers, P., Stochasticity in the adaptive dynamics of
  evolution: the bare bones. {\em J. Biol. Dyn.} 5,147--162 (2011).
\bibitem{Lindvall} Lindvall T., On the maximum of a branching process.
  {\it Scand. J. Statist.}  3, 209--214 (1976).
\bibitem{meltran}  M\'el\'eard, S. and Tran, V.~C., Trait substitution
  sequence process and canonical equation for age-structured
  populations. {\em J. Math. Biol.} {\bf 58}, 881--921  (2009).
\bibitem{metivier} M\'etivier, M., Weak convergence of measure valued
  processes using Sobolev imbedding techniques. {\em Lect.
Notes in Maths.} {\bf 1236},  172--183. Springer, Berlin (1987).
\bibitem{metz96} Metz, J.A.J. Geritz, S.A.H., Mesz\'ena, G., et al.,
  Adaptive Dynamics, a geometrical study of nearly faithful
  reproduction. In: S.J. van Strien and S.M: Verdyan Lunel (eds.) {\it
    Stochastic and Spatial Structures of Dynamical Systems},
  183--231. North Holland, Amsterdam (1996).


\bibitem{Oe} Oelschl\"ager, K., Limit theorems for age-structured
populations. {\it Ann. Prob.} {\bf 18}, 290--318 (1990).
\bibitem{tran} Tran, V. C., Large population limit and time behaviour
  of a stochastic particle model describing an age-structured
  population. {\it ESAIM: Probability and Statistics} 12, 345--386 (2008).
\bibitem{waxman}  Waxman, D. and Gavrilets, S., 20 Questions on
  adaptive dynamics. {\it J. Evol. Biol.} {\bf 18}, 1139--1154 (2005).
\bibitem{webb} Webb, G.F.,  {\it Theory of Nonlinear Age-Dependent
    Population Dynamics.} Dekker, New York (1985).

\end{thebibliography}
\end{document}